\definecolor{black}{rgb}{0.0, 0.0, 0.0}
\definecolor{red}{rgb}{1.0, 0.5, 0.5}
\newcommand{\margnote}[1]{
\ifthenelse{\boolean{shownotes}}%
{\marginpar{\raggedright\tiny\texttt{#1}}}%
{}%
}
\newcommand{\hole}[1]{
\ifthenelse{\boolean{shownotes}}%
{\begin{center} \fbox{ \rule {.25cm}{0cm} \rule[-.1cm]{0cm}{.4cm}
\parbox{.85\textwidth}{\begin{center} \texttt{#1}\end{center}} \rule
{.25cm}{0cm}}\end{center}} {} }
\title[C-S model with singular communication weight]{Local well-posedness of the generalized Cucker-Smale model}
\author[Carrillo]{Jos\'{e} A. Carrillo}
\address[Jos\'{e} A. Carrillo]{\newline Department of Mathematics
    \newline Imperial College London, London SW7 2AZ, United Kingdom}
\email{carrillo@imperial.ac.uk}
\author[Choi]{Young-Pil Choi}
\address[Young-Pil Choi]{\newline Department of Mathematics
    \newline Imperial College London, London SW7 2AZ, United Kingdom}
\email{young-pil.choi@imperial.ac.uk}
\author[Hauray]{Maxime Hauray}
\address[Maxime Hauray]{\newline Centre de Math\'ematiques et Informatique (CMI), \newline
    Universit\'e de Provence, Technop\^ole Ch\^ateau-Gombert, Marseille, France}
\email{maxime.hauray@univ-amu.fr}
\numberwithin{equation}{section}
\newtheorem{theorem}{Theorem}[section]
\newtheorem{lemma}{Lemma}[section]
\newtheorem{proposition}{Proposition}[section]
\newtheorem{remark}{Remark}[section]
\newtheorem{definition}{Definition}[section]
\newcommand{\R}{\mathbb R}
\newcommand{\pp}{\mathcal{P}}
\newcommand{\bbr}{\mathbb R}
\newcommand{\mt}{\mathcal{T}}
\newcommand{\e}{\varepsilon}
\def\charf {\mbox{{\text 1}\kern-.30em {\text l}}}
\begin{document}
\allowdisplaybreaks

\date{\today}



\begin{abstract}
In this paper, we study the local well-posedness of two types of generalized Cucker-Smale (in short C-S) flocking models. We consider two different communication weights, singular and regular ones, with nonlinear coupling velocities $v|v|^{\beta-2}$ for $\beta > \frac{3-d}{2}$. For the singular communication weight, we choose $\psi^1(x) = 1/|x|^{\alpha}$ with $\alpha \in (0,d-1)$ and $\beta \geq 2$ in dimension $d > 1$. For the regular case, we select $\psi^2(x) \geq 0$ belonging to $(L_{loc}^\infty \cap \mbox{Lip}_{loc})(\R^d)$ and $\beta \in (\frac{3-d}{2},2)$. We also remark the various dynamics of C-S particle system for these communication weights when $\beta \in (0,3)$.
\end{abstract}

\maketitle \centerline{\date}

\tableofcontents

%
%
%
%
\section{Introduction}
In the last years, collective problems as a dynamic feature of autonomous agents are active recent subjects in many different disciplines such as statistical physics, mathematics, biology, and control theory, etc., due to its engineering, physical, and biological applications \cite{Bir,CDF,CKFL,DM,LP,PE,PEG,TT}. Among various mathematical models describing the interactions between the individuals, our interest in the paper lies the flocking model which was introduced by Cucker-Smale \cite{CS1,CS2}. Cucker-Smale(in short C-S) model is a type of Newton particle model motivated by the work of Vicsek et. al. \cite{Vic}, and rigorous asymptotic flocking estimates depending on the decay rate of regular communication weight were provided in \cite{CS2}. Later, these estimates were improved and refined in the literature \cite{CFRT,HL,HT}. 

Despite of its novelty of C-S  model for flocking dynamics, there are several drawbacks in real applications. Among them, our study is dealing with two issues; collision avoidance between individuals and general coupling for velocities of them. For the real applications of C-S model, e.g., unmanned aerial vehicle, it is important to avoid any collision. However the original C-S model \cite{CS1,CS2} does not take into account this, and as a consequence, there are some studies to prevent the collisions by adding new forcing terms to control the distance between the individuals \cite{CD,PKH} or considering a modified communication weight \cite{ACHL}. Concerning the general coupling for velocities, the original C-S model has a linear coupling for velocities, but there is no specific physical reason for this. Thus we will consider the nonlinear velocity coupling which is averaged over the strength of the relative speed with a certain exponent.
 
More precisely, let $f = f(x,v,t)$ be the one-particle distribution function at a spatial domain
$x, v \in \bbr^d$ at time $t$ in dimension $d >1$.
In this situation the density function $f$ is determined by
\begin{equation}\label{k-CS-si}
\partial_t f + v \cdot \nabla_x f
 + \nabla_v \cdot \big[F_i(f)f\big] = 0,
  \quad (x, v) \in \bbr^d \times \bbr^d,~~t > 0,
\end{equation}
subject to initial data:
\begin{equation}\label{ini-k-CS}
f(x,v,0) =: f^0(x,v), \quad (x, v) \in \bbr^d \times \bbr^d,
\end{equation}
where $F_i$ denotes the alignment force between particles:
\begin{equation*}
F_i(f)(x,v,t) := -\int_{\bbr^d \times \bbr^d}
\psi^i(x-y)\nabla_v \phi(v-w)f(y,w,t)dydw, \quad i=1,2.
\end{equation*}
Here the potential function $\phi(v)$ for the coupling of velocity is given by 
\begin{equation*}
\phi(v) = \frac1\beta|v|^\beta, \quad \beta > \frac{3-d}{2}
\end{equation*}
Note that if $\beta=2$, then $\nabla\phi(v) = v$, and the velocity coupling is the same with Cucker-Smale's one. For the communication weight $\psi^i$, we take the following two different forms:
\begin{equation}\label{condi:psi}
\psi^1(x) := \frac{1}{|x|^\alpha}, \quad \alpha \in (0,d-1) \quad \mbox{and} \quad 0 \leq \psi^2(x) = \psi^2(-x) \in \left(L_{loc}^\infty \cap \mbox{Lip}_{loc}\right)(\R^d).
\end{equation}
Here $\psi^2 \in \mbox{Lip}_{loc}(\R^d)$ implies that for any compact set $K \subset \R^d$, there is some constant $L_K > 0$ such that
\[
|\psi^2(x) - \psi^2(y)| \leq L_K|x-y|, \quad x,y \in K.
\]
In order to emphasize on the fact that $\beta$ depends on the two different communication weights $\psi^i$, we denote the exponent $\beta_i$ of the coupling of velocity corresponding to $\psi^i$ for $i=1,2$. 

The purpose of the paper is to establish a local existence of unique weak solutions to the kinetic C-S equations \eqref{k-CS-si}-\eqref{ini-k-CS}. For the singular communication weight $F_1$, the singularity with respect to the position is allowed up to the one of Newtonian interactions, and the exponent $\beta_1$ of the potential function is greater than or equals to original Cucker-Smale's one, i.e., $\alpha \in (0,d-1)$ and $\beta_1 \geq 2$. In the other case, the communication weight $F_2$ is sufficiently regular, but the coupling for velocity has a singularity of the order of $\beta_2 \in \left( \frac{3-d}{2}, 2\right)$. For both cases, the general framework for the well-posedness of solutions can not be applied due to the singularity either in position or velocity, and it has not been addressed in the literature to the best of our knowledge. We adapted the idea from our recent work \cite{CCH} to overcome these difficulties. On the other hand, in case of enough regularity of the communication weight and coupling for velocity in the forcing term, the well-posedness for these kinetic equations describing the collective behaviour is included in \cite{CCR}. We also refer to \cite{HJ1,HJ2} for the issue of the mean-field derivation of Vlasov-type equations with singular kernel.

The rest of this paper is organized as follows. In Section \ref{sec2}, we briefly provide definition and properties of Wasserstein distances, and state our main results on well-posedness. Section 3 is devoted to give the details of the proof of a unique weak solution to the system \eqref{k-CS-si} for each case. Our strategy is first to construct approximate solutions, and obtain the uniform bounds of approximate solutions with respect to the regularization. Then we finally show that the approximate solutions are Cauchy sequence, and let the parameter of regularization tend to zero to have the existence of the weak solutions. Finally, in the last section we remark that the dynamics of the generalized C-S particles system can lead to collisions in finite time and we discussed the rate and the conditions under which this can happen.\newline

\noindent {\bf Notations:} $|\cdot|$ denotes the Euclidean distance, and $\mathcal{P}_p(\R^d)$ stands for the set of probability measures with bounded moments of order $p \in [1,\infty)$. For notational simplicity, we also use the following notations throughout the paper: For $1 \leq p \leq \infty$,
\[
\|g\|_{L^p} := \|g\|_{L^p(U)} \quad \mbox{where $U$ can be either $\R^d$ or $\R^d \times \R^d$},
\]
\[
\|g\|_{L^1 \cap L^p} := \|g\|_{L^1} + \|g\|_{L^p}\,, \quad \mbox{and} \quad \|g\| := \|g\|_{L^{\infty}(0,T;L^{1} \cap
L^{p})}.
\]

%
%
%
%
\section{Preliminaries and main results}\label{sec2}

\subsection{Mathematical tools}
In this
part, we present several definition and properties of Wasserstein
distances that will be mainly used in our arguments for the well-posedness.

\begin{definition}(Wasserstein p-distance) \label{defdp}
Let $\rho_1,~ \rho_2$ be two Borel probability measures on
$\bbr^d$. Then the Euclidean Wasserstein distance of order $1\leq
p<\infty$ between $\rho_1$ and $\rho_2$ is defined as
\begin{equation*}\label{wasser-p-dis}
d_p(\rho_1,\rho_2) := \inf_{\gamma} \left( \int_{\bbr^d \times
\bbr^d} |x-y|^p \, d\gamma(x,y) \right)^{1/p},
\end{equation*}
and, for $p=\infty$ (this is the limiting case, as $p \to
\infty$),
\begin{equation*}\label{wasser-inf-dis}
d_{\infty}(\rho_1,\rho_2) := \inf_{\gamma} \left(\sup_{(x,y) \in \text{supp}(\gamma)} | x- y | \right),
\end{equation*}
where the infimum runs over all transference plans, i.e., all
probability measures $\gamma$ on $\bbr^d \times \bbr^d$ with
marginals $\rho_1$ and $\rho_2$ respectively,
\[
\int_{\bbr^d \times \bbr^d} \phi(x) d\gamma(x,y) = \int_{\bbr^d}
\phi(x) \rho_1(x) dx,
\]
and
\[
\int_{\bbr^d \times \bbr^d} \phi(y) d\gamma(x,y) = \int_{\bbr^d}
\phi(y) \rho_2(y) dy,
\]
for all $\phi \in \mathcal{C}_b(\bbr^d)$.
\end{definition}
Note that $\mathcal{P}_p(\R^d), 1\leq p < \infty$ is a
complete metric space endowed with the $p$-Wassertein distance
$d_p$, see \cite{Vil}. We refer to \cite{GS,MC} for more
details in the case of the $d_\infty$ distance.

In particular $p=1$, Wasserstein-1 distance $d_1$ is equivalent to the bounded Lipschitz distance which is also called {\it Monge-Kantorovich-Rubinstein distance}:
\[
d_1(\rho_1,\rho_2) = \sup\left\{ \int_{\R^d} \varphi(\xi)(\rho_1(\xi) - \rho_2(\xi))d\xi \Big| \varphi \in \mbox{Lip}(\R^d), \mbox{ Lip}(\varphi) \leq 1\right\},
\]
where Lip($\R^d$) denotes the set of Lipschitz functions on $\R^d$, and Lip($\varphi$) the Lipschitz constant of a function $\varphi$.
We also remind the definition of the push-forward of a measure by
a mapping in order to give the relation between Wasserstein
distances and optimal transportation.

\begin{definition}
Let $\rho_1$ be a Borel measure on $\bbr^d$ and $\mathcal{T} :
\bbr^d \to \bbr^d$ be a measurable mapping. Then the push-forward
of $\rho_1$ by $\mathcal{T}$ is the measure $\rho_2$ defined by
\[
\rho_2(B) = \rho_1(\mathcal{T}^{-1}(B)) \quad \mbox{for} \quad B
\subset \bbr^d,
\]
and denoted as $\rho_2 = \mathcal{T} \# \rho_1$.
\end{definition}

We recall in the next proposition some classical properties, which proofs may be found in \cite{Vil}.

\begin{proposition}\label{prop-proper}
(i) The definition of $\rho_2 = \mathcal{T} \# \rho_1$ is equivalent
to
$$
\int_{\bbr^d} \phi(x)\, d\rho_2(x) = \int_{\bbr^d} \phi(\mathcal
T(x))\, d\rho_1(x)
$$
for all $\phi\in \mathcal{C}_b(\R^d)$. Given a probability measure with
bounded $p$-th moment $\rho_0$, consider two measurable mappings
$X_1,X_2 : \bbr^d \to \bbr^d$, then the following inequality
holds.
\[
d_p^p(X_1 \# \rho_0, X_2 \# \rho_0) \leq \int_{\bbr^d \times
\bbr^d} |x-y|^p d\gamma(x,y) = \int_{\bbr^d} | X_1(x) - X_2(x)|^p
d\rho_0(x).
\]
Here, we used as transference plan $\gamma = (X_1 \times
X_2)\#\rho_0$ in Definition \ref{defdp}. \newline

\noindent (ii) Given $\{\rho_k\}_{k=1}^{N}$ and $\rho$ in $\pp_1(\R^d)$, the followings are equivalent:
\begin{itemize}
\item $d_1(\rho_k,\rho) \to 0$ as $k \to +\infty$.
\item $\rho_k$ converges to $\rho$ weakly-* as measures and
\[
\int_{\R^d}|\xi|\rho_k(\xi) d\xi \to \int_{\R^d} |\xi| \rho(\xi) d\xi, \quad \mbox{as} \quad n \to +\infty.
\]
\end{itemize}
\end{proposition}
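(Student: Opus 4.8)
The plan is to treat the two parts separately, as (i) is essentially a change-of-variables computation whereas (ii) is the standard result metrizing weak convergence by $d_1$. For the equivalence in (i), I would start from the set-level definition $\rho_2(B) = \rho_1(\mt^{-1}(B))$ and note that the integral identity $\int_{\bbr^d}\phi\,d\rho_2 = \int_{\bbr^d}\phi(\mt(x))\,d\rho_1(x)$ is exactly this definition when $\phi=\charf_B$; one then extends it by linearity to simple functions, by monotone convergence to nonnegative measurable $\phi$, and by splitting into positive and negative parts to all $\phi\in\mathcal{C}_b(\R^d)$. The converse follows by approximating $\charf_B$ (for $B$ open, say) from below by continuous functions and passing to the limit.

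With this characterization available, the inequality is immediate. The map $X_1\times X_2$ sends $\rho_0$ to $\gamma:=(X_1\times X_2)\#\rho_0$, whose two marginals are precisely $X_1\#\rho_0$ and $X_2\#\rho_0$; hence $\gamma$ is an admissible transference plan in Definition \ref{defdp}. Since $d_p^p$ is defined as the infimum of $\int|x-y|^p\,d\gamma$ over all such plans, evaluating at this particular $\gamma$ and then applying the change-of-variables formula to the cost function $(x,y)\mapsto|x-y|^p$ yields $d_p^p(X_1\#\rho_0,X_2\#\rho_0)\le\int|x-y|^p\,d\gamma=\int_{\bbr^d}|X_1(x)-X_2(x)|^p\,d\rho_0(x)$.

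For (ii), the forward direction is quick: by the Kantorovich--Rubinstein duality recalled above, $\left|\int_{\R^d}\varphi(\xi)(\rho_k(\xi)-\rho(\xi))\,d\xi\right|\le\mbox{Lip}(\varphi)\,d_1(\rho_k,\rho)$ for every Lipschitz $\varphi$, so $d_1(\rho_k,\rho)\to0$ forces convergence against bounded Lipschitz test functions (hence weak-$*$ convergence) and, on testing against the $1$-Lipschitz function $\varphi(\xi)=|\xi|$, convergence of the first moments. The reverse direction is the delicate one and the step I expect to be the main obstacle: weak-$*$ convergence alone allows mass to escape to infinity, and it is precisely the moment hypothesis that must rule this out. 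I would argue that weak-$*$ convergence together with $\int_{\R^d}|\xi|\,\rho_k\to\int_{\R^d}|\xi|\,\rho$ forces uniform integrability of $|\xi|$ against the family $\{\rho_k\}$; truncating at radius $R$ then splits the transport cost into an inner contribution governed by weak convergence and an outer tail bounded uniformly by the moments, and letting $R\to\infty$ sends $d_1(\rho_k,\rho)$ to zero. Making this uniform-integrability upgrade precise, through a Vitali / de la Vall\'ee--Poussin type argument, is the technical heart of the statement.
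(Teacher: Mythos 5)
The paper gives no proof of this proposition: it is presented as a list of classical facts with the proofs deferred to the cited reference \cite{Vil}, and your argument reconstructs exactly those standard proofs --- change of variables for the push-forward, the explicit plan $(X_1\times X_2)\#\rho_0$ inserted into the infimum for the inequality, Kantorovich--Rubinstein duality for the forward implication in (ii), and the truncation/uniform-integrability argument for the converse. Your plan is correct throughout; the only part left as a sketch is the converse in (ii), where the uniform integrability you invoke follows from testing the weak-$*$ convergence against $|\xi|\wedge R$ and subtracting from the convergent first moments, which is precisely the mechanism you describe.
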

Finally, we recall {\it a priori} energy estimates of kinetic Cucker-Smale model.
\begin{lemma}\label{k-energy-est}
Let $f$ be any smooth solutions to the system \eqref{k-CS-si}. Then we have
\begin{align*}
\begin{aligned}
&(i) \,\,\,\, \frac{d}{dt} \int_{\bbr^d \times \bbr^d} f dx dv = 0, \quad \frac{d}{dt} \int_{\bbr^d \times \bbr^d} v f dx dv = 0, \cr
&(ii) \,\, \frac{1}{2}\frac{d}{dt} \int_{\bbr^d \times \bbr^d} |v|^2 f dx dv = -\frac{1}{2}\int_{\bbr^{2d} \times \bbr^{2d}} \psi^i(x-y) | v - w|^{\beta_i} f(x,v,t) f(y,w,t) dxdydv dw.
\end{aligned}
\end{align*}
\end{lemma}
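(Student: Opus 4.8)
The plan is to establish both identities by the standard weak (energy) formulation: multiply the transport equation \eqref{k-CS-si} by a velocity test function $\varphi(v)$, integrate over $\bbr^d\times\bbr^d$, and integrate by parts. For (i) I would take $\varphi\equiv 1$ and $\varphi(v)=v$; for (ii) I would take $\varphi(v)=\tfrac12|v|^2$. In every case the transport term contributes nothing: since $v\cdot\nabla_x f=\nabla_x\cdot(vf)$ and $\varphi$ does not depend on $x$, the integral $\int \varphi(v)\,v\cdot\nabla_x f\,dx\,dv$ is a pure $x$-divergence and vanishes after integrating by parts (assuming, as we may for smooth decaying solutions, that boundary terms at infinity are zero). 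The force term is handled by integration by parts in $v$: $\int \varphi(v)\,\nabla_v\cdot[F_i(f)f]\,dx\,dv=-\int \nabla_v\varphi(v)\cdot F_i(f)\,f\,dx\,dv$. With $\varphi\equiv 1$ this already gives $\tfrac{d}{dt}\int f\,dx\,dv=0$, so mass is conserved.

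The heart of the argument is the symmetrization of the resulting double integrals, which exploits the structure of the force. Recall $\nabla_v\phi(v-w)=|v-w|^{\beta_i-2}(v-w)$, so $F_i(f)(x,v)=-\int \psi^i(x-y)|v-w|^{\beta_i-2}(v-w)f(y,w)\,dy\,dw$. For momentum, $\varphi(v)=v$ yields $\nabla_v\varphi=\mathrm{Id}$, hence
\[
\frac{d}{dt}\int v f\,dx\,dv = -\int \psi^i(x-y)\,|v-w|^{\beta_i-2}(v-w)\,f(x,v)f(y,w)\,dx\,dy\,dv\,dw .
\]
Swapping the dummy variables $(x,v)\leftrightarrow(y,w)$ and using that $\psi^i$ is even (true for both $\psi^1(x)=|x|^{-\alpha}$ and the symmetric $\psi^2$) while $(v-w)|v-w|^{\beta_i-2}$ is odd, the integral equals its own negative and therefore vanishes, giving conservation of momentum.

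For the energy, $\varphi(v)=\tfrac12|v|^2$ gives $\nabla_v\varphi=v$, so
\[
\frac12\frac{d}{dt}\int |v|^2 f\,dx\,dv = -\int \psi^i(x-y)\,v\cdot\nabla_v\phi(v-w)\,f(x,v)f(y,w)\,dx\,dy\,dv\,dw .
\]
I would again symmetrize in $(x,v)\leftrightarrow(y,w)$: the same parity considerations turn $v\cdot\nabla_v\phi(v-w)$ into $-\,w\cdot\nabla_v\phi(v-w)$, and averaging the two representations replaces the factor by $\tfrac12(v-w)\cdot\nabla_v\phi(v-w)$. The key algebraic identity $(v-w)\cdot\nabla_v\phi(v-w)=|v-w|^{\beta_i-2}|v-w|^2=|v-w|^{\beta_i}$ then produces exactly the claimed right-hand side of (ii).

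The computation itself is formal and short; the only genuine subtlety is justifying the manipulations. The integrations by parts require enough decay of $f$ in $(x,v)$ so that no boundary contributions survive, and the symmetrization requires the double integrals to be absolutely convergent despite the singularity of the kernel at the origin (the spatial singularity $|x-y|^{-\alpha}$ for $\psi^1$, and the velocity weight $|v-w|^{\beta_i}$, which for small $\beta_i$ degenerates rather than blows up and so is harmless). Since the lemma concerns \emph{smooth} solutions, I expect this integrability to follow from the standing assumptions $\alpha\in(0,d-1)$ and $\beta_i>\tfrac{3-d}{2}$ together with the finite moments of $f$; confirming these bounds, so that Fubini and the change of variables are legitimate, is the main point to verify carefully, whereas the parity-based cancellation is entirely mechanical once integrability is in hand.
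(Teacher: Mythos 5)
Your computation is correct and is the standard argument; the paper in fact states this lemma without any proof (it is merely ``recalled'' as an a priori estimate), and the one analogous computation that does appear in the paper --- the first-moment estimate in Remark 3.1 --- uses exactly the evenness of $\psi^i$ together with the $(x,v)\leftrightarrow(y,w)$ interchange that you employ. The only caveat worth flagging is your side remark that the velocity kernel ``degenerates rather than blows up'': for $\beta_i<1$ (allowed in Framework B when $d\geq 3$) the factor $\nabla_v\phi(v-w)\sim|v-w|^{\beta_i-1}$ is genuinely singular at $v=w$, so the absolute convergence there rests on the $L^p$ integrability hypotheses rather than on smoothness alone, but this does not affect the validity of the formal identities.
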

%
%
%
%

\subsection{Main results}
In this part, we introduce the notion of weak solution and stability in our frameworks, and state our main results on well-posedness.

We first present two frameworks depending on singularity of communication weight and strength of velocity coupling.
\begin{itemize}
\item {\it Framework A (Singular communication weight and high strength of velocity coupling)}: For the force term $F_1$, initial data $f^0$ has a compact support in velocity belongs to $L^p$ for some $p$. The exponents $\alpha$ and $\beta_1$, appearing in the definition of $F_1$, and $p$ should satisfy
\begin{equation*}
(\alpha + 1)p' < d \quad \mbox{and} \quad \beta_1 \geq 2, \quad \mbox{respectively}.
\end{equation*}
where $p'$ is a conjugate of $p$, i.e., $p' := p/(p - 1)$. 
\item {\it Framework B (Low strength of velocity coupling)}: For the force term $F_2$, initial data $f^0$ has a compact support in position and velocity, belongs to $L^p$ for some $p$, and the exponent $\beta_2$ of nonlinear velocity coupling satisfies $\beta_2 \in \left(\frac{3-d}{2}, 2\right)$. Here the exponent $\beta_2$ and $p$ should satisfy
\[
(3-2\beta_2)p' < d \,\, \mbox{ for } \,\, \beta_2 \in \left( \frac{3-d}{2}, 1\right) \quad \mbox{and} \quad (2-\beta_2)p' < d \,\, \mbox{ for } \,\, \beta_2 \in (1,2).
\]
\end{itemize}

\begin{definition}\label{def-ext} For a given $T \in (0,\infty)$, $f$ is a weak solution of \eqref{k-CS-si} on the time-interval $[0,T)$ if and only if the following condition are satisfied:
\begin{enumerate}
\item
$ f\in L^{\infty}(0,T; (L_+^1 \cap L^p)(\R^d \times\R^d)) \cap \mathcal{C}([0,T],\pp_1(\R^d \times \R^d)), $
\item For all $\Psi \in \mathcal{C}_c^{\infty}(\R^d \times \R^d \times [0,T])$,
\begin{align*}
\begin{aligned}
&\int_{\R^d \times \R^d}f(x,v,T)\Psi(x,v,T) dx dv - \int_0^T \int_{\R^d \times \R^d}
f\big(\partial_t\Psi + \nabla_x \Psi \cdot v
+ \nabla_v \Psi \cdot F_i(f) \big)dxdv dt
 \cr
&\qquad \qquad \qquad  = \int_{\R^d\times\R^d}
 f^0(x,v) \Psi^0(x,v)dxdv.
\end{aligned}
\end{align*}
\end{enumerate}
\end{definition}

We now state our main result on the local existence of a unique weak solution.

\begin{theorem}\label{ma-thm}
Suppose that either Framework A or Framework B hold, and the initial data $f^0$ satisfies
\begin{equation}\label{ini-condi}
f^0 \in (L_+^1 \cap L^p)(\R^d \times \R^d) \cap \pp_1(\R^d \times \R^d).
\end{equation}
Then there exist $T>0$ and a unique weak solution $f$ in the sense of Definition {\rm\ref{def-ext}} on the time interval $[0,T]$. Furthermore, if $f_i,i=1,2$ are two such solutions to \eqref{k-CS-si}, then we have the following $d_1$-stability estimate.
    \[
    \frac{d}{dt}d_1(f_1(t),f_2(t)) \leq Cd_1(f_1(t),f_2(t)), \quad \mbox{for} \quad t \in [0,T],
    \]
where $C$ is a positive constant.
\end{theorem}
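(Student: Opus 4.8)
The plan is to prove existence by a regularization argument and to extract uniqueness together with the $d_1$-stability estimate from one common Lipschitz-type bound on the force. The backbone is the Lagrangian (push-forward) representation of solutions along their characteristic flow, so that every estimate reduces to controlling $F_i(f)$ and its gradient in $L^\infty$ uniformly, and then to a Grönwall argument along coupled characteristics.

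First I would remove the singularity. In Framework A I replace $\psi^1(x)=|x|^{-\alpha}$ by a bounded Lipschitz mollification $\psi^1_\e$ (for instance $\psi^1_\e(x)=(|x|^2+\e^2)^{-\alpha/2}$), and in Framework B I regularize the singular velocity coupling $\nablav\phi(v)=|v|^{\beta_2-2}v$ near the origin. The regularized force $F_i^\e$ is then locally Lipschitz, so the characteristic system $\dot X=V$, $\dot V=F_i^\e(f^\e)(X,V)$ has a unique flow by Cauchy--Lipschitz, and $f^\e(t):=(X^\e(t),V^\e(t))\#f^0$ is the unique smooth solution of the regularized equation; this uses Proposition~\ref{prop-proper}(i) and the hypothesis that $f^0\in(L^1_+\cap L^p)$ has compact velocity support (and, in Framework B, compact position support).

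The heart of the proof is a family of \emph{uniform-in-$\e$} bounds on an interval $[0,T]$ with $T$ independent of $\e$. I would control the velocity-support radius $R_V^\e(t)$, the position-support radius $R_X^\e(t)$ (needed in Framework B, where $\psi^2$ is only locally bounded), and $\|f^\e\|=\|f^\e\|_{L^\infty(0,T;L^1\cap L^p)}$. Since $\dot V=F_i^\e$ and $\dot X=V$, the radii grow at a rate governed by $\|F_i^\e(f^\e)\|_{L^\infty}$, while the $L^p$ norm obeys the transport identity $\frac{d}{dt}\|f^\e\|_{L^p}^p=(1-p)\int(\nablav\cdot F_i^\e)\,|f^\e|^p\,dx\,dv$, so it is governed by $\|\nablav\cdot F_i^\e(f^\e)\|_{L^\infty}$. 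The crucial lemma is therefore a bound of $\|F_i^\e(f)\|_{L^\infty}$ and $\|\nabla_{x,v}F_i^\e(f)\|_{L^\infty}$ by a constant depending only on $\|f\|_{L^1\cap L^p}$, $R_V$, $R_X$, uniformly in $\e$. I would obtain it by splitting the convolution defining $F_i$ into a near-singularity part, estimated by Hölder's inequality using that the singular kernel lies in $L^{p'}_{loc}$, and a far part, estimated by an $L^\infty$--$L^1$ pairing via the support/moment bounds. This is precisely where the Framework conditions enter: $(\alpha+1)p'<d$ makes $|x|^{-\alpha-1}\charf_{|x|\le1}\in L^{p'}$ so that $\nablax F_1$ is controlled, while $(2-\beta_2)p'<d$ (respectively $(3-2\beta_2)p'<d$ for $\beta_2<1$) makes the singular velocity kernel $|v|^{\beta_2-2}\charf_{|v|\le1}$ integrable to the power $p'$, controlling $\nablav F_2$. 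Inserting these bounds into the differential inequalities and running a continuation (bootstrap) argument closes the estimates on a common $[0,T]$.

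Finally, I would pass to the limit and obtain stability through one Grönwall estimate along coupled characteristics. Given two densities with flows $Z_1,Z_2$ and an optimal plan $\gamma_0$ between their initial data, the plan $\gamma_t:=(Z_1(t)\times Z_2(t))\#\gamma_0$ is admissible, so by Definition~\ref{defdp} and Proposition~\ref{prop-proper}(i), $d_1(f_1(t),f_2(t))\le D(t):=\int|Z_1(t)-Z_2(t)|\,d\gamma_0$. Differentiating and using $\dot X=V$ gives $\frac{d}{dt}D\le D+\int|F_i(f_1)(Z_1)-F_i(f_2)(Z_2)|\,d\gamma_0$, and I would split the force difference as $[F_i(f_1)(Z_1)-F_i(f_1)(Z_2)]+[F_i(f_1)(Z_2)-F_i(f_2)(Z_2)]$. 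The first term is bounded by $\|\nabla F_i(f_1)\|_{L^\infty}|Z_1-Z_2|$ from the uniform gradient bound. The second term, the difference of two forces generated by different densities at the same point, is the main obstacle: the kernel $\psi^i(x-y)\nablav\phi(v-w)$ is \emph{not} Lipschitz, so it cannot be controlled by $d_1$ directly. I would resolve this by rewriting the difference against $\gamma_t$ and again performing the near/far splitting; the regular far part is genuinely Lipschitz and yields $C\int|\zeta_1-\zeta_2|\,d\gamma_t=C\int|Z_1-Z_2|\,d\gamma_0=CD(t)$, while the singular near part is absorbed using the uniform $L^p$ bound exactly as in the force lemma. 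This produces $\frac{d}{dt}D\le CD$, hence $d_1(f_1(t),f_2(t))\le e^{Ct}d_1(f_1^0,f_2^0)$; uniqueness is the case $f_1^0=f_2^0$, and applying the same inequality to the regularized family shows $\{f^\e\}$ is Cauchy in $\mathcal C([0,T];d_1)$. The $L^1\cap L^p$ bound survives the limit by weak-$*$ lower semicontinuity, and one checks that the weak formulation of Definition~\ref{def-ext} passes to the limit, which completes the proof.
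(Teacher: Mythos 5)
Your architecture coincides with the paper's: regularize the singular factor, prove a uniform-in-$\e$ velocity-support bound and a uniform $L^1\cap L^p$ bound on a common $[0,T]$ via the near/far splitting of the kernel with H\"older's inequality in $L^{p'}$--$L^p$ (this is exactly where $(\alpha+1)p'<d$, resp.\ $(2-\beta_2)p'<d$, enters), then run a Gr\"onwall estimate in $d_1$ along coupled characteristics using an optimal plan and the push-forward representation, and finally pass to the limit in the weak formulation. Your splitting of the force difference into $[F(f_1)(Z_1)-F(f_1)(Z_2)]+[F(f_1)(Z_2)-F(f_2)(Z_2)]$ is only a cosmetic reorganization of the paper's $\mathcal{J}_1+\mathcal{J}_2$; both rest on the same weighted Lipschitz bound $|\psi^1(x)-\psi^1(y)|\le C|x-y|/\min(|x|,|y|)^{1+\alpha}$ integrated against the $L^p$ densities, and your observation that the estimate must be kept in integrated form against the transported plan (rather than as a pointwise Lipschitz bound) is the correct and essential point.

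There is one step that fails as written: the claim that ``applying the same inequality to the regularized family shows $\{f^\e\}$ is Cauchy.'' The functions $f^\e$ and $f^{\e'}$ solve \emph{different} equations with the \emph{same} initial datum, so the bare stability inequality $\frac{d}{dt}d_1\le C\,d_1$ would give $d_1(f^\e(t),f^{\e'}(t))\le e^{Ct}d_1(f^0,f^0)=0$, which is absurd; it proves nothing about the Cauchy property. What is needed, and what the paper supplies in Proposition \ref{prop-grow}, is a quantitative consistency estimate for the regularization error, namely $|\psi^1_\e(x)-\psi^1(x)|\le C\e\,|x|^{-(1+\alpha)}$ (the paper's \eqref{j-1-est-3}), which after the same near/far H\"older splitting contributes an extra term and yields $\frac{d}{dt}d_1(f_\e,f_{\e'})\le C\bigl(d_1(f_\e,f_{\e'})+\e+\e'\bigr)$; the $O(\e+\e')$ source term is what makes the family Cauchy. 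With your choice $\psi^1_\e(x)=(|x|^2+\e^2)^{-\alpha/2}$ the analogous bound $|\psi^1_\e-\psi^1|\le C\min(\e^2|x|^{-\alpha-2},|x|^{-\alpha})$ holds and integrates to $O(\e)$ under $(\alpha+1)p'<d$ (indeed under the weaker $\alpha p'<d$), so the gap is reparable, but it must be stated and proved: it is a genuinely separate ingredient from the stability estimate, not a corollary of it.
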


%
%
%
%
\section{Local well-posedness of the generalized Cucker-Smale model}\label{sec-local-well}
In this section, we provide a detailed proof of Theorem \ref{ma-thm} in the Framework $A$. Since the arguments for the Framework $B$ are similar to this, we will give a sketch proof for it in the last part of this section. We also notice that it is enough to show Theorem \ref{ma-thm} in the Framework $A$ when $\beta_1=2$ due to the estimate of compact support of $f$ in velocity (see Lemma \ref{lem-gro-vel}). 
%
%
\subsection{A regularized model}
In this part, we will consider a regularized model. 
For this, we first introduce a standard mollifier $\theta$:
\[
\theta(x) = \theta(-x) \geq 0, \quad \theta \in \mathcal{C}_0^{\infty}(\R^d), \quad \mbox{supp }\theta \subset B(0,1), \quad \int_{\R^d} \theta(x) dx = 1,
\]
and we set a sequence of smooth mollifiers:
\[
\theta_{\e}(x) := \frac{1}{\e^d}\theta \left( \frac{x}{\e}\right).
\]
Here $B(0,1):=\{x \in \R^d : |x| \leq 1\}$. Then we regularize $\psi^1$ such as $\psi^1_\e := \psi^1 * \theta_\e$. Since $\psi^1_\e \in \mathcal{C}^{\infty}(\R^d)$, we deduce from well-posedness theories in \cite{CCR,HL,HT} that there exists a unique global solution $f_\e$ which has compact support in kinetic velocity to the following equations.
\begin{equation}\label{reg-k-CS-si}
\left\{ \begin{array}{ll}
\partial_t f_\e + v \cdot \nabla_x f_\e
 + \nabla_v \cdot \big[F_1(f_\e)f_\e\big] = 0, & \qquad (x, v) \in \bbr^d \times \bbr^d,~~t > 0,\\[2mm]
\displaystyle F(f_\e)(x,v,t) := \int_{\R^d \times \R^d} \psi^1_\e(x-y)(w - v)f_\e(y,w,t)dydw, & \qquad (x, v) \in \bbr^d \times \bbr^d,~~t > 0,\\[4mm]
f_\e(x,v,0) =: f^0(x,v), & \qquad (x,v) \in \R^d \times \R^d.
\end{array} \right.
\end{equation}
For the solution $f_\e$ to the system \eqref{reg-k-CS-si}, we will show the uniform $L^p$-bound of $f_\e$ in $\e$. For this, we first need to estimate the growth of the kinetic velocity. Consider the forward bi-characteristics $Z_\e(s) := \left( X_\e(s;0,x,v), V_\e(s;0,x,v)\right)$ satisfying the following ODE system:
\begin{align}\label{traj-vel-est}
\begin{aligned}
\frac{dX_\e(s)}{ds} &= V_\e(s),\cr
\frac{dV_\e(s)}{ds} &= \int_{\R^d \times \R^d} \psi^1_\e\left( X_\e(s) - y\right)(w - V_\e(s))f_\e(y,w,s) dy dw.
\end{aligned}
\end{align}
Set $\Omega_\e(t)$ and $R_\e^v(t)$ the $v$-projection of compact supp$f(\cdot,t)$ and maximum value of $v$ in $\Omega_\e(t)$, respectively:
\begin{equation*}
\Omega_\e(t) := \overline{\{ v \in \R^d : \exists (x,v) \in \R^d \times \R^d \mbox{ such that } f_\e(x,v,t) \neq 0 \}}, \quad R_\e^v(t):= \max_{v \in \Omega_\e(t)}|v|.
\end{equation*}
Then we have the following growth estimate for support of $f_\e$ in velocity.
\begin{lemma}\label{lem-gro-vel} Let $Z_\e(t)$ be the solution to the particle trajectory \eqref{traj-vel-est} issued from the compact supp$_{(x,v)}f^0$ at time $0$. Then we have
\[
R_\e^v(t) \leq R_\e^v(0) = R_0^v:= \max_{v\in \Omega(0)}|v|,
\]
i.e., the support of $f(x,v,t)$ in velocity is uniformly bounded by the one of $f^0(x,v)$.
\end{lemma}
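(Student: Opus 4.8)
The plan is to track the magnitude of the velocity along each characteristic and to show that the outermost velocity never increases. Note first that $\psi^1_\e = \psi^1 * \theta_\e$ is a convolution of two nonnegative functions and is therefore itself nonnegative; since $\psi^1_\e$ is moreover smooth, $f_\e$ is a genuine smooth solution whose support is transported by the flow $Z_\e$. In particular the velocity support $\Omega_\e(t)$ is the $v$-projection of the image of $\mathrm{supp}\,f^0$ under $Z_\e(t;0,\cdot,\cdot)$, so $R_\e^v(t)$ is finite and is attained along some characteristic; I may take all of this for granted from the cited global existence theory.

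The first concrete step is to differentiate $|V_\e(s)|^2$ along an arbitrary trajectory. Using the velocity equation in \eqref{traj-vel-est},
\begin{align*}
\frac{1}{2}\frac{d}{ds}|V_\e(s)|^2 &= V_\e(s)\cdot\frac{dV_\e(s)}{ds} \\
&= \int_{\R^d\times\R^d}\psi^1_\e\big(X_\e(s)-y\big)\big(V_\e(s)\cdot w - |V_\e(s)|^2\big)f_\e(y,w,s)\,dy\,dw.
\end{align*}
The crucial observation is that $w$ ranges only over $\Omega_\e(s)$, so $|w|\leq R_\e^v(s)$ and hence $V_\e(s)\cdot w\leq |V_\e(s)|\,R_\e^v(s)$ by Cauchy--Schwarz. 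Combined with $\psi^1_\e\geq 0$ and $f_\e\geq 0$, this yields the pointwise bound
\[
\frac{1}{2}\frac{d}{ds}|V_\e(s)|^2 \leq |V_\e(s)|\big(R_\e^v(s)-|V_\e(s)|\big)\int_{\R^d\times\R^d}\psi^1_\e\big(X_\e(s)-y\big)f_\e(y,w,s)\,dy\,dw.
\]

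Next I would evaluate this inequality along the characteristic that realizes the maximum, i.e. the trajectory with $|V_\e(s)| = R_\e^v(s)$ at the instant considered. On it the right-hand side vanishes, which is exactly the signal that the envelope $R_\e^v$ cannot increase. To turn this into a rigorous monotonicity statement I would pass to the upper Dini derivative $D^+R_\e^v(t)$, since $R_\e^v$, being a maximum over a family of characteristics, is only Lipschitz in $t$. A standard envelope argument (differentiating at a point where the maximum is attained) then gives $D^+R_\e^v(t)\leq 0$ for a.e.\ $t$, whence $R_\e^v(t)\leq R_\e^v(0)=R_0^v$ and the support in velocity stays inside its initial size.

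The main obstacle is precisely the non-smoothness of $t\mapsto R_\e^v(t)$: because it is defined as a maximum, one cannot differentiate it directly, and the passage from the per-characteristic estimate to monotonicity of the envelope must be argued carefully via Dini derivatives (equivalently, by a Gronwall-type argument showing $\max_s|V_\e(s)|^2$ is non-increasing). Everything else — the nonnegativity of $\psi^1_\e$, the Cauchy--Schwarz step, and the compact transport of the support — is routine once $f_\e$ is known to be a smooth solution with compact velocity support.
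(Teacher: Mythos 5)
Your proposal is correct and follows essentially the same route as the paper: differentiate $|V_\e|^2$ along the characteristic attaining the maximum, use $\psi^1_\e\geq 0$, $f_\e\geq 0$ and Cauchy--Schwarz to see that $(w-V_\e)\cdot V_\e\leq 0$ for $w\in\Omega_\e(t)$, and conclude that the velocity envelope is non-increasing. The only difference is that you spell out the Dini-derivative/envelope technicality explicitly, whereas the paper handles it by choosing the maximizing trajectory and citing the analogous argument in \cite[Section 4]{CFRT}.
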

\begin{proof} For the proof, we employ the same idea in \cite[Section 4]{CFRT}. We choose $V_\e(t)$ that make the value of $R_\e^v(t)$ such that $\frac{dR_\e^v(t)}{dt}$ is well-defined to obtain
\begin{align*}
\begin{aligned}
\frac{1}{2}\frac{d}{dt}\left( R_\e^v(t) \right)^2 &= \frac{1}{2}\frac{d}{dt}\left|V_\e(t)\right|^2 = V_\e(t) \cdot \frac{d}{dt}V_\e(t) \cr
&= \int_{\R^d \times \R^d} \psi^1_\e\left(X_\e(t) - y\right)\left(w - V_\e(t)\right)\cdot V_\e(t)f_\e(y,w,t)dydw \cr
&\leq 0.
\end{aligned}
\end{align*}
Here we used the fact that for any $w \in \Omega_\e(t)$, $\left( w - V_\e(t) \right) \cdot V_\e(t) \leq 0.$
This completes the proof.
\end{proof}
\begin{remark}Set $\tilde{\Omega}_0 := B(0,R_0^v)$. Then it follows from Lemma \ref{lem-gro-vel} that $\Omega_\e(t) \subset \tilde{\Omega}_0$ for $t \geq 0$.
\end{remark}
We now show the $L^p$-estimate of $f_\e$ with the help of the estimate in Lemma \ref{lem-gro-vel}.
\begin{proposition}\label{prop-lp} Let $f_\e$ be the solution to the system \eqref{reg-k-CS-si}. Then there exists $T>0$ we have the uniform $L^1\cap L^p$-estimate of $f_\e$:
\[
\sup_{t \in [0,T]}\|f_\e\|_{L^1 \cap L^p} \leq C,
\]
where $C$ is a positive constant independent of $\e$.
\end{proposition}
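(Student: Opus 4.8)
The $L^1$ part is immediate: since \eqref{reg-k-CS-si} is in conservation form and $f_\e\ge 0$, integrating in $(x,v)$ gives $\|f_\e(t)\|_{L^1}=\|f^0\|_{L^1}$ for all $t$ (Lemma \ref{k-energy-est}(i)). Hence only the $L^p$ bound requires work, and the whole point is to make it uniform in $\e$. The plan is to derive a differential inequality for $\|f_\e(t)\|_{L^p}^p=\int f_\e^p\,dxdv$, which is legitimate because $f_\e$ is smooth with compact $v$-support by the cited well-posedness theory. Multiplying \eqref{reg-k-CS-si} by $pf_\e^{p-1}$ and integrating by parts, the free-transport term $v\cdot\nabla_x f_\e$ contributes nothing (as $\nabla_x\cdot v=0$), while the two velocity terms combine into
\[
\frac{d}{dt}\|f_\e\|_{L^p}^p = -(p-1)\int_{\R^d\times\R^d}\big(\nabla_v\cdot F(f_\e)\big)f_\e^p\,dxdv.
\]
Since the regularized force is linear in velocity, $\nabla_v\cdot F(f_\e)(x,t)=-d\,(\psi^1_\e * \rho_\e)(x,t)$, where $\rho_\e(x,t):=\int_{\R^d}f_\e(x,w,t)\,dw$ is the macroscopic density; note the integrand has a favorable sign, so $\|f_\e\|_{L^p}$ is nondecreasing, and everything reduces to bounding $\|\psi^1_\e*\rho_\e\|_{L^\infty}$ uniformly in $\e$.

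For that bound I would first remove the $\e$-dependence of the kernel by proving the pointwise inequality $\psi^1_\e(z)\le C\,\psi^1(z)$ with $C$ independent of $\e$: for $|z|\ge 2\e$ one has $|z-u|\ge|z|/2$ on $\mbox{supp}\,\theta_\e$, while for $|z|<2\e$ one uses $\|\psi^1_\e\|_{L^\infty}\le C\e^{-\alpha}$ together with $|z|^{-\alpha}\ge(2\e)^{-\alpha}$. This is the crucial observation that mollification does not worsen the singularity. Next I would control $\rho_\e$ in $L^p_x$: because the velocity support stays inside $\tilde\Omega_0=B(0,R_0^v)$ by Lemma \ref{lem-gro-vel}, Hölder in $w$ gives $\|\rho_\e(t)\|_{L^p}\le|\tilde\Omega_0|^{1/p'}\|f_\e(t)\|_{L^p}$, and mass conservation gives $\|\rho_\e(t)\|_{L^1}=1$.

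It then remains to estimate the singular convolution $\psi^1*\rho_\e$ in $L^\infty$. Splitting the kernel at $|x-y|=1$, the far part is bounded by $\|\rho_\e\|_{L^1}=1$, while for the near part Hölder yields $\big(\int_{|z|\le1}|z|^{-\alpha p'}\,dz\big)^{1/p'}\|\rho_\e\|_{L^p}$, and the $z$-integral converges precisely because $\alpha p'<d$, which is implied by the Framework~A condition $(\alpha+1)p'<d$. Combining the three steps gives $\|\psi^1_\e*\rho_\e\|_{L^\infty}\le C(1+\|f_\e\|_{L^p})$ with $C$ independent of $\e$, hence
\[
\frac{d}{dt}\|f_\e\|_{L^p}\le C\big(\|f_\e\|_{L^p}+\|f_\e\|_{L^p}^2\big).
\]
Since the right-hand side is a Riccati nonlinearity whose constant is $\e$-independent and the initial value $\|f_\e(0)\|_{L^p}=\|f^0\|_{L^p}$ is fixed, a standard comparison argument produces a time $T>0$, depending only on $\|f^0\|_{L^p}$, $R_0^v$, $\alpha$, $p$, $d$, on which $\|f_\e(t)\|_{L^p}$ stays bounded uniformly in $\e$; together with the conserved $L^1$ norm this is the claimed estimate. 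The main obstacle is exactly the uniform-in-$\e$ treatment of the singular kernel, namely establishing $\psi^1_\e\le C\psi^1$ and closing the singular convolution at the sharp integrability threshold $\alpha p'<d$; the Gronwall/Riccati step, which yields only local (rather than global) time, is then routine.
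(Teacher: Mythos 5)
Your proposal is correct and follows essentially the same route as the paper: $L^1$ conservation, the $L^p$ energy identity reducing everything to a uniform-in-$\e$ $L^\infty$ bound on $\nabla_v\cdot F_1(f_\e)=-d\,\psi^1_\e*\rho_\e$ via a near/far splitting of the kernel, H\"older with $\alpha p'<d$ together with the compact velocity support from Lemma \ref{lem-gro-vel}, and a Riccati comparison giving a local time $T$. The only (immaterial) difference is that you get the $\e$-uniformity from the pointwise domination $\psi^1_\e\le C\psi^1$, whereas the paper cuts off $\psi^1$ first and applies Young's inequality $\|g*\theta_\e\|_{L^{q}}\le\|g\|_{L^{q}}$ to each piece.
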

\begin{proof} We first notice that the conservation of mass to the system \eqref{reg-k-CS-si}:
\[
\frac{d}{dt}\int_{\R^d \times \R^d} f_\e dx dv = 0.
\]
We next turn to $L^p$-estimate of $f_\e$. It is a straightforward to get
\[
\frac{d}{dt}\int_{\R^d \times \R^d} f_\e^p dx dv = -(p-1)\int_{\R^d \times \R^d} \left( \nabla_v \cdot (F_1(f_\e))\right) f_\e^p dx dv.
\]
For the estimate of $\|\nabla_v \cdot (F_1(f_\e))\|_{L^\infty}$, we use a cut-off function $\chi_1 \in \mathcal{C}_c^{\infty}(\R^d)$ defined by
\begin{equation*}
\chi_1(x) := \left\{
\begin{array}{ll}
1 & |x| \leq 1,\\[1mm]
0 & |x| >1.
\end{array} \right.
\end{equation*}
Then it follows from the assumption on the exponent $\alpha$  that
\[
\psi^1_\e(x) = \psi^1 * \theta_\e = (\psi^1(\chi_1 + (1 - \chi_1)))*\theta_\e = (\psi^1\chi_1)*\theta_\e + (\psi^1(1 - \chi_1))*\theta_\e,
\]
and
\[
\|(\psi^1 \chi_1)*\theta_\e\|_{L^{p'}} \leq \|\psi^1 \chi_1\|_{L^{p'}} \leq C, \quad \|(\psi^1(1 - \chi_1))*\theta_\e\|_{L^\infty} \leq \|\psi^1(1 - \chi_1)\|_{L^\infty} \leq 1.
\]
Thus we obtain
\begin{align*}
\begin{aligned}
|\nabla_v \cdot (F_1(f_\e))| &\leq d\int_{\R^d \times \R^d} |(\psi^1 \chi_1) * \theta_\e||f_\e| dy dw + d\int_{\R^d \times \R^d} |(\psi^1(1 - \chi_1))*\theta_\e||f_\e|dy dw \cr
&\leq C(R_0^v)^{\frac{1}{p'}}\|\psi^1 \chi_1\|_{L^{p'}}\|f_\e\|_{L^p} + \|\psi^1(1 - \chi_1)\|_{L^\infty}\|f_\e\|_{L^1}\cr
&\leq C\|f_\e\|_{L^1 \cap L^p},
\end{aligned}
\end{align*}
where $C$ is a positive constant independent of $\e$. Hence we have
\[
\frac{d}{dt}\|f_\e\|_{L^1 \cap L^p} \leq Cd\left(1 - \frac{1}{p}\right) \|f_\e\|^2_{L^1 \cap L^p},
\]
and this yields that there exists $T > 0$,
\[
\sup_{t \in [0,T]}\|f_\e\|_{L^1 \cap L^p} \leq C,
\]
where $C$ is a positive constant independent of $\e$.
\end{proof}

\begin{remark}\label{rmk-1-mom} 1. It is easy to find the estimate of first moments of $f_\e$. In fact, it directly follows from \eqref{reg-k-CS-si} that
\begin{align*}
\begin{aligned}
\frac{d}{dt}\int_{\R^d \times \R^d} |v|f_\e dx dv &= \int_{\R^d \times \R^d} \frac{v}{|v|} \cdot F_1(f_\e)f_\e dx dv \cr
& \leq \int_{\R^{2d} \times \R^{2d}} \psi^1_\e(x-y)|w|f_\e(x,v)f_\e
(y,w) dx dv dy dw \cr
&\qquad- \int_{\R^{2d} \times \R^{2d}} \psi^1_\e(x-y)|v|f_\e(x,v)f_\e
(y,w) dx dv dy dw \cr
&= 0,
\end{aligned}
\end{align*}
where we used $\psi^1_\e(x) = \psi^1_\e(-x)$, and interchange of variables $(x,v) \leftrightarrow (y,w)$. This yields
\begin{equation}\label{fir-mom-est}
\| vf_\e \|_{L^{\infty}(0,T;L^1)} \leq \| vf^0 \|_{L^1}.
\end{equation}
Since
\[
\frac{d}{dt}\int_{\R^d \times \R^d}|x| f_\e dx dv \leq \int_{\R^d \times \R^d}|v| f_\e dx dv,
\]
we deduce from \eqref{fir-mom-est} that
\[
\| xf_\e \|_{L^{\infty}(0,T;L^1)} \leq \| xf^0 \|_{L^1} + T\| vf^0 \|_{L^1}.
\]
2. It follows from the definition of $\psi^1_\e$ that
\begin{align}\label{j-1-est-1}
\begin{aligned}
\psi^1_\e(x) &= \int_{\R^d} \frac{1}{|x-y|^{\alpha}}\theta_\e(y)dy\cr
&\leq \int_{\left\{y: |y| < \frac{|x|}{2}\right\}}
\frac{\theta_{\e}(y)}{|x-y|^{\alpha}}  dy +
\int_{\left\{y: |y| \geq \frac{|x|}{2}\right\}} \frac{
\theta_{\e}(y)}{|x-y|^{\alpha}}  dy \cr &\leq
\frac{2^{\alpha}\e}{|x|^{\alpha}}\int_{\R^d}\theta_{\e}(y)dy
+ \mathbf{1}_{\{|x| \leq 2\e\}}\int_{\left\{y:~ \e \geq |y|
\right\}}\frac{\theta_{\e}(y)}{|x-y|^{\alpha}}dy \cr &\leq
\frac{C}{|x|^{\alpha}} +
\frac{C\e^{\alpha}}{|x|^{\alpha}}\int_{\left\{ y:~ \e \geq
|y| \right\}}\frac{\theta_{\e}(y)}{|x-y|^{\alpha}} dy \leq
\frac{C}{|x|^{\alpha}}.
\end{aligned}
\end{align}
Thus we obtain
\begin{equation}\label{key-ineq}
|\psi^1_\e(x) - \psi^1_\e(y)| \leq \frac{C|x-y|}{\min(|x|,|y|)^{1+\alpha}},
\end{equation}
where $C$ is independent of $\e$.
\end{remark}
We now show the growth estimate of $d_1(f_\e(t),f_{\e'}(t))$.
\begin{proposition}\label{prop-grow} Let $f_\e$ and $f_{e'}$ be two solutions to the system \eqref{reg-k-CS-si}. Then we have
\[
\frac{d}{dt}d_1(f_\e(t),f_{\e'}(t)) \leq C(d_1(f_\e(t),f_{\e'}(t)) + \e + \e'),
\]
for $\e, \e' > 0$. Here $C$ is independent of $\e$ and $\e'$.
\end{proposition}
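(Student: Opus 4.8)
The plan is to estimate the growth of the Wasserstein-1 distance between two regularized solutions $f_\e$ and $f_{\e'}$ by working at the level of characteristics and using the optimal-transport representation from Proposition~\ref{prop-proper}(i). I would start by writing both solutions as push-forwards of the common initial datum $f^0$ along their respective flow maps $Z_\e(t)=(X_\e(t),V_\e(t))$ and $Z_{\e'}(t)=(X_{\e'}(t),V_{\e'}(t))$ solving the characteristic system \eqref{traj-vel-est}. Then, taking the transference plan $\gamma=(Z_\e(t)\times Z_{\e'}(t))\#f^0$ in Definition~\ref{defdp}, Proposition~\ref{prop-proper}(i) gives
\[
d_1(f_\e(t),f_{\e'}(t)) \leq \int_{\R^d\times\R^d}\big(|X_\e(t)-X_{\e'}(t)| + |V_\e(t)-V_{\e'}(t)|\big)\,f^0\,dx\,dv.
\]
Differentiating this in $t$ reduces the whole problem to controlling $\frac{d}{dt}(|X_\e-X_{\e'}|+|V_\e-V_{\e'}|)$ pointwise along trajectories, which is the core estimate.

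The $X$-component is immediate since $\frac{d}{dt}|X_\e-X_{\e'}|\le |V_\e-V_{\e'}|$. The real work is the $V$-component: I would subtract the two velocity equations and split the difference of forces $F_1(f_\e)(X_\e,V_\e)-F_1(f_{\e'})(X_{\e'},V_{\e'})$ into three groups of terms. First, the difference coming from the mollification parameter, $\psi^1_\e - \psi^1_{\e'}$, which I expect to bound by $C(\e+\e')$ using the explicit pointwise control \eqref{j-1-est-1} on $\psi^1_\e$ together with the uniform $L^1\cap L^p$ bound of Proposition~\ref{prop-lp}; this is what produces the $\e+\e'$ terms in the statement. Second, the difference in the evaluation point of the kernel, $\psi^1_\e(X_\e-y)-\psi^1_\e(X_{\e'}-y)$, controlled by the Lipschitz-type estimate \eqref{key-ineq}; here the singular denominator $\min(|x|,|y|)^{1+\alpha}$ must be absorbed by an integral against $f_\e$ or $f_{\e'}$, and the condition $(\alpha+1)p'<d$ of Framework~A is precisely what makes $|x|^{-(1+\alpha)}$ locally $L^{p'}$ so that Hölder's inequality with the uniform $L^p$ bound closes the estimate. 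Third, the difference in the velocity factor $(w-V_\e)-(w-V_{\e'})=V_{\e'}-V_\e$, which is linear and directly yields a $|V_\e-V_{\e'}|$ contribution, using the uniform velocity-support bound $R_0^v$ from Lemma~\ref{lem-gro-vel} to control the mass factor.

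After assembling these three contributions, each is bounded either by $C|Z_\e-Z_{\e'}|$ or by $C(\e+\e')$, with constants independent of $\e,\e'$ thanks to the uniform bounds already established. Integrating against $f^0$ and using the push-forward inequality above then gives
\[
\frac{d}{dt}d_1(f_\e(t),f_{\e'}(t)) \leq C\big(d_1(f_\e(t),f_{\e'}(t)) + \e + \e'\big),
\]
as claimed. The main obstacle I anticipate is the second group of terms: one must handle the singular kernel difference \eqref{key-ineq} with its denominator $\min(|x|,|y|)^{1+\alpha}$ in a way that keeps the resulting constant uniform in $\e,\e'$. The natural device is to integrate the singularity against the uniformly $L^p$-bounded densities while placing $|x|^{-(1+\alpha)}$ in $L^{p'}_{loc}$ via the Framework~A condition, much as in the proof of Proposition~\ref{prop-lp}; care is needed to symmetrize correctly in $(x,y)$ so that the bad point of the kernel is always paired with an $L^p$ factor rather than being left uncontrolled.
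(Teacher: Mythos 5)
Your core analytic estimates are the right ones and match the paper's: the three-way splitting of the force difference (mollifier difference $\psi^1_\e-\psi^1_{\e'}$, evaluation-point difference handled via \eqref{key-ineq}, and the linear velocity-factor difference) is exactly the paper's decomposition into $\mathcal{J}_1$ and $\mathcal{J}_2$, and you correctly identify that $(\alpha+1)p'<d$ plus the uniform $L^1\cap L^p$ bound and the velocity-support bound $R_0^v$ are what close each piece, including the need to symmetrize in $(x,y)$ so the singular denominator $\min(|x|,|y|)^{1+\alpha}$ is always integrated against an $L^p$ density (the paper does this via the interchange $(x,v)\leftrightarrow(y,w)$, splitting into $\mathcal{K}_1+\mathcal{K}_2$).

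However, there is a structural gap in how you set up the comparison. You anchor everything at $t=0$ through the common initial datum, obtaining $d_1(f_\e(t),f_{\e'}(t))\le D(t)$ with $D(t):=\int|Z_\e(t;0,x,v)-Z_{\e'}(t;0,x,v)|f^0\,dxdv$ and then a differential inequality for $D$. This only yields $d_1\le D$ with $D'\le C(D+\e+\e')$, i.e.\ the integrated Gronwall bound $d_1(t)\le C(\e+\e')(e^{Ct}-1)$ --- enough for the Cauchy-sequence argument, but not the stated differential inequality for $d_1$ itself, since $d_1\le D$ does not control $\frac{d}{dt}d_1$ by $\frac{d}{dt}D$ away from times where the two coincide. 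The paper's device is to fix an \emph{arbitrary} $t_0\in[0,T)$, take an optimal transport map $\mt^0$ between $f_\e(t_0)$ and $f_{\e'}(t_0)$ (which exists because $f_\e(t_0)$ is absolutely continuous), and push both measures forward by their respective flows, setting $Q_{\e,\e'}(t):=\int|Z_\e(t;t_0,x,v)-Z_{\e'}(t;t_0,\mt^0(x,v))|f_\e(x,v,t_0)\,dxdv$. Since $Q_{\e,\e'}(t_0)=d_1(f_\e(t_0),f_{\e'}(t_0))$ by optimality while $d_1(t)\le Q_{\e,\e'}(t)$ for $t\ge t_0$, the one-sided derivative of $Q_{\e,\e'}$ at $t_0^+$ dominates that of $d_1$ at every $t_0$, which is what produces the claimed inequality pointwise in time (and is also what makes the same argument deliver the $d_1$-stability estimate between two solutions with \emph{different} data, where your common-initial-datum plan is unavailable). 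Your estimates carry over verbatim once you replace your time-$0$ identity coupling by $\mt^0$ at time $t_0$; without that replacement the proposition as stated is not proved.
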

\begin{proof}We first define flows $Z_\e := (X_\e,V_\e), Z_{\e'}:= (X_{\e'},V_{\e'}) : \R_+ \times \R_+ \times \R^d \times \R^d \to \R^d \times \R^d$ generated from \eqref{reg-k-CS-si} satisfying
\begin{equation}\label{traj}
\left\{ \begin{array}{ll}
\displaystyle \frac{d}{dt} X_\e(t;s,x,v) = V_\e(t;s,x,v), &\\[4mm]
\displaystyle \frac{d}{dt} V_\e(t;s,x,v) = F_1(f_\e)(Z_\e(t;s,x,v),t), &
\\[4mm]
(X_\e(s;s,x,v),V_\e(s;s,x,v)) = (x,v), &
\end{array} \right.
\end{equation}
and
\begin{equation}\label{traj2}
\left\{ \begin{array}{ll}
\displaystyle \frac{d}{dt}X_{\e'}(t;s,x,v) = V_{\e'}(t;s,x,v), &\\[4mm]
\displaystyle \frac{d}{dt} V_{\e'}(t;s,x,v) = F_1(f_{\e'})(Z_{\e'}(t;s,x,v),t), &
\\[4mm]
(X_{\e'}(s;s,x,v),V_{\e'}(s;s,x,v)) = (x,v), &
\end{array} \right.
\end{equation}
for all $s,t \in [0,T]$. Since $\psi^1_{\e}, \psi^1_{\e'} \in \mathcal{C}^{\infty}$, \eqref{traj} and \eqref{traj2} are well-defined for $s,t \in [0,T]$. We now choose an optimal transport map $\mt^0 = (\mt_1^0(x), \mt_2^0(v))$ between $f_\e(t_0)$ and $f_{\e'}(t_0)$ for fixed $t_0 \in [0,T)$, i.e., $f_{\e'}(t_0) = \mt^0 \# f_\e(t_0)$. It is known from \cite{CPJ} that such an optimal transport map exists when $f_\e(t_0)$ is absolutely continuous with respect to the Lebesgue measure. Then we apply the similar argument in \cite[Lemma 5.5]{HL} to obtain $f_\e(t) = Z_\e(t;t_0,\cdot,\cdot)\#f_\e(t_0)$ and $f_{\e'}(t) = Z_\e(t;t_0,\cdot,\cdot)\#f_{\e'}(t_0)$ using the mass transportation (not necessarily optimal) notation of push-forward. More precisely, we obtain that for any $g \in \mathcal{C}_c^1(\R^d \times \R^d \times [0,T])$,
\begin{align}\label{push-f}
\begin{aligned}
&\int_{\R^d \times \R^d} g(x,v,t)f_\e(x,v,t) dx dv - \int_{\R^d \times \R^d} g(x,v,t_0)f_\e(x,v,t_0) dx dv \cr
& \qquad = \int_{t_0}^t \int_{\R^d \times \R^d} \left( \partial_s g + v\cdot\nabla_x g + F_1(f_\e) \cdot \nabla_v g\right)f_\e(x,v,s)dxdvds.
\end{aligned}
\end{align}
We now choose
\[
g(x,v,t) := h(X_\e(s;t,x,v),V_\e(s;t,x,v)), \quad \mbox{for fixed } t,
\]
where $h \in \mathcal{C}_c^1(\R^d \times \R^d)$. This makes the r.h.s of \eqref{push-f} vanished and
\begin{equation}\label{push-f-2}
\int_{\R^d \times \R^d} h(x,v)f_\e(x,v,t) dx dv = \int_{\R^d \times \R^d}h(X_\e(t_0;t,x,v),V_\e(t_0;t,x,v))f_\e(x,v,t_0)dxdv.
\end{equation}
Thus we conclude $f_\e(t) = Z_\e(t;t_0,\cdot,\cdot) \#f_\e(t_0)$. Same argument can be applied to get $f_{\e'}(t) = Z_\e(t;t_0,\cdot,\cdot)\#f_{\e'}(t_0)$. We also notice that
\[
\mt^t \# f_\e(t) = f_{\e'}(t), \quad \mbox{where} \quad \mt^t = Z_{\e'}(t;t_0,\cdot,\cdot) \circ \mt^0 \circ Z_\e(t_0;t,\cdot,\cdot).
\]
By Definition \ref{defdp}, when $p=1$, we obtain
\[
d_1(f_\e(t),f_{\e'}(t)) \leq \int_{\R^d \times \R^d} |Z_\e(t;t_0,x,v) - Z_{\e'}(t;t_0,\mt^0(x,v))|f_\e(x,v,t_0) dx dv.
\]
Set
\[
Q_{\e,\e'}(t) := \int_{\R^d \times \R^d} |Z_\e(t;t_0,x,v) - Z_{\e'}(t;t_0,\mt^0(x,v))|f_\e(x,v,t_0) dx dv.
\]
Then straightforward computations yield
\begin{align*}
\begin{aligned}
&\frac{d}{dt}Q_{\e,\e'}(t) \Big|_{t = t_0+} \cr
&\leq \int_{\R^d \times \R^d} |V_\e(t;t_0,x,v) - V_{\e'}(t;t_0,\mt^0(x,v))|f_\e(x,v,t_0)dxdv \Big|_{t = t_0+} \cr
& + \int_{\R^d \times \R^d} \left| F_1(f_\e)(Z_\e(t;t_0,x,v),t) -F_1(f_{\e'})(Z_{\e'}(t;t_0,\mt^0(x,v)),t) \right|f_\e(x,v,t_0)dx dv \bigg|_{t = t_0+} \cr
&=: \mathcal{I} + \mathcal{J}.
\end{aligned}
\end{align*}
For the estimate of $\mathcal{I}$, it is easy to find
\begin{equation}\label{i-est}
\mathcal{I} = \int_{\R^d \times \R^d} |v - \mt^0_2(v)|f_\e(x,v,t_0) dx dv \leq C d_1(f_\e(t_0),f_{\e'}(t_0)).
\end{equation}
For the estimate of $\mathcal{J}$, we notice that
\begin{align*}
\begin{aligned}
\mathcal{J} &= \int_{\R^d \times \R^d} \bigg| \int_{\R^d \times \R^d} \psi^1_\e(x - y)(w - v)f_\e(y,w,t_0)dy dw \cr
&\quad -\int_{\R^d \times \R^d} \psi^1_{\e'}(\mt^0_1(x)- y)(w - \mt^0_2(v))f_{\e'}
(y,w,t_0)dy dw \bigg|f_\e(x,v,t_0)dx dv \cr
&= \int_{\R^d \times \R^d} \bigg| \int_{\R^d \times \R^d} \psi^1_\e(x - y)(w - v)f_\e(y,w,t_0)dy dw \cr
&\quad -\int_{\R^d \times \R^d} \psi^1_{\e'}(\mt^0_1(x)- \mt^0_1(y))(\mt^0_2(w) - \mt^0_2(v))f_{\e}
(y,w,t_0)dy dw \bigg|f_\e(x,v,t_0)dx dv.
\end{aligned}
\end{align*}
For notational simplicity, we omit the time dependency on $t_0$ in the rest of computations. We decompose $\mathcal{J}$ into two parts: 
\[
\mathcal{J} = \int_{\R^d \times \R^d} | \mathcal{J}_1 + \mathcal{J}_2 | f_\e(x,v) dx dv,
\]
where
\begin{align*}
\begin{aligned}
\mathcal{J}_1 &:= \int_{\R^d \times \R^d} \left(\psi^1_\e(x-y) - \psi^1_{\e'}\left(\mt^0_1(x)-\mt^0_1(y)\right)\right)(w-v)f_\e(y,w)dydw,\cr
\mathcal{J}_2 &:= \int_{\R^d \times \R^d} \psi^1_{\e'}\left(\mt^0_1(x) - \mt^0_1(y)\right)\left( (w-v) - \left(\mt^0_2(w) - \mt^0_2(v)\right)\right)f_\e(y,w)dydw.
\end{aligned}
\end{align*}
For the estimates of $\mathcal{J}$, we divide it into two steps to make the reading easier.

\begin{itemize}
\item In Step A, we show 
\begin{equation}\label{claim}
\int_{\R^d \times \R^d}|\mathcal{J}_1|f_\e dxdv \leq C\max(\|f_\e\|, \|f_{\e'}\|) d_1(f_\e(t_0), f_{\e'}(t_0)) + C \|f_\e\|^2(\e + \e'),
\end{equation}
where $C$ is independent of $\e$. \newline
\item In Step B, we show
\[
\int_{\R^d \times \R^d}|\mathcal{J}_2|f_\e dxdv \leq C\|f_{\e}\|\|f_{\e'}\|d_1(f_\e(t_0),f_{\e'}(t_0)).
\]
\end{itemize}

\textbf{Step A}: By adding and subtracting, we find that
\begin{align}\label{j-1-est-2}
\begin{aligned}
\mathcal{J}_1 &\leq \int_{\R^d \times \R^d} \left|(\psi^1_\e - \psi^1_{\e'})(x-y)\right||w-v|f_\e(y,w)dydw \cr
&\quad + \int_{\R^d \times \R^d} \left|\psi^1_{\e'}(x-y) - \psi^1_{\e'}\left(\mt^0_1(x)-\mt^0_1(y)\right)\right||w-v|f_\e(y,w)dydw.
\end{aligned}
\end{align}
It follows from a similar estimate to \eqref{j-1-est-1} that
\begin{align}\label{j-1-est-3}
\begin{aligned}
|\psi^1_\e(x) - \psi^1(x)| &\leq \int_{\R^d} |\psi^1(x-y) - \psi^1(x)|\theta_\e(y) dt\cr
&\leq 2\int_{\R^d} \left( \frac{1}{|x|^{1+\alpha}} + \frac{1}{|x-y|^{1+\alpha}}\right) |y|\theta_\e(y)dy \cr
&\leq 2\e\int_{\left\{ y:~ \e \geq
|y| \right\}} \left( \frac{1}{|x|^{1+\alpha}} + \frac{1}{|x-y|^{1+\alpha}}\right)\theta_\e(y)dy \cr
&\leq \frac{C\e}{|x|^{1+\alpha}}.
\end{aligned}
\end{align}
Then we use \eqref{j-1-est-3} to obtain
\begin{align}
&\int_{\R^{2d} \times \R^{2d}} \left|(\psi^1_\e - \psi^1)(x-y)\right||w-v|f_\e(y,w)f_\e(x,v)dxdvdydw \cr
& \quad \leq 2R_0^v\int_{\R^{2d} \times \tilde{\Omega}_0^2} \left|(\psi^1_\e - \psi^1)(x-y)\right|f_\e(y,w)f_\e(x,v)dxdvdydw\cr
& \quad \leq C\e\int_{\R^{2d} \times \tilde{\Omega}_0^2} \frac{1}{|x-y|^{1+\alpha}}f_\e(y,w)f_\e(x,v)dxdvdydw \cr
&\quad \leq C\e\int_{\R^d \times \R^d} \left( \int_{\{ y : |x-y| < 1\} \times \tilde{\Omega}_0} + \int_{\{ y : |x-y| \geq 1\} \times \tilde{\Omega}_0} \frac{1}{|x-y|^{1+\alpha}}f_\e(y,w) dydw\right)f_\e(x,v) dxdv\cr
&\quad \leq C\e\int_{\R^d \times \R^d} \left(\left(\int_{\{ y : |x-y|\leq 1\}}\frac{1}{|x-y|^{(1+\alpha)p'}}dy\right)^{\frac{1}{p'}}\|f_\e\|_{L^p} +\|f_\e\|_{L^1}\right)f_\e(x,v) dxdv\cr
&\quad\leq C\e\|f_\e\|^2 \leq C\e.\label{j-1-est-4}
\end{align}
Similarly, we get
\[
\int_{\R^{2d} \times \R^{2d}} \left|(\psi^1_{\e'} - \psi^1)(x-y)\right||w-v|f_\e(y,w)f_\e(x,v)dxdvdydw \leq C\e'.
\]
Thus we have
\begin{equation}\label{j-1-est-5}
\int_{\R^{2d} \times \R^{2d}} \left|(\psi^1_\e - \psi^1_{\e'})(x-y)\right||w-v|f_\e(y,w)f_\e(x,v)dxdvdydw \leq C(\e+\e').
\end{equation}
For a related to second term in the rhs of \eqref{j-1-est-2}, we employ \eqref{key-ineq} and interchange the variables $(x,v) \leftrightarrow (y,w)$ to find
\begin{align*}
\begin{aligned}
&\int_{\R^{2d} \times \tilde{\Omega}_0^2} \left|\psi^1_{\e'}(x-y) - \psi^1_{\e'}\left(\mt^0_1(x)-\mt^0_1(y)\right)\right||w-v|f_\e(x,v)f_\e(y,w)dxdvdydw\cr
&\quad \leq 2R_0^v\int_{\R^{2d} \times \tilde{\Omega}_0^2} \left( \frac{|\mt^0_1(x) - x|}{|\mt^0_1(x) - \mt^0_1(y)|^{1+\alpha}} + \frac{|\mt^0_1(x) - x|}{|x-y|^{1+\alpha}} \right)f_\e(x,v)f_\e(y,w)dxdvdydw \cr
&=:\mathcal{K}_1 + \mathcal{K}_2.
\end{aligned}
\end{align*}
By direct computations, we get
\begin{align*}
\begin{aligned}
\mathcal{K}_1 &= 2R_0^v \int_{\R^d \times \R^d} |\mt^0_1(x) - x|f_\e(x,v) \left( \int_{\R^d \times \Omega_0} \frac{1}{|\mt^0_1(x) - y|^{1+\alpha}}f_{\e'}(y,w) dydw\right) dxdv \cr
&\leq C\|f_{\e'}\|\int_{\R^d \times \R^d} |\mt^0_1(x) - x|f_\e(x,v) dxdv \leq C\|f_{\e'}\|d_1(f_\e(t_0),f_{\e'}(t_0)),
\end{aligned}
\end{align*}
where we used the same estimates in \eqref{j-1-est-4} to obtain
\[
\int_{\R^d \times \tilde{\Omega}_0} \frac{1}{|\mt^0_1(x) - y|^{1+\alpha}}f_{\e'}(y,w) dydw \leq C\|f_{\e'}\|.
\]
Similarly, we also obtain $\mathcal{K}_2 \leq C\|f_{\e}\|d_1(f_\e(t_0),f_{\e'}(t_0))$. This yields
\begin{align}\label{j-1-est-6}
\begin{aligned}
&\int_{\R^{2d} \times \R^{2d}} \left|\psi^1_{\e'}(x-y) - \psi^1_{\e'}\left(\mt^0_1(x)-\mt^0_1(y)\right)\right||w-v|f_\e(x,v)f_\e(y,w)dxdvdydw\cr
& \quad \qquad \leq C\max(\|f_\e\|, \|f_{\e'}\|) d_1(f_\e(t_0),f_{\e'}(t_0)).
\end{aligned}
\end{align}
We now combine \eqref{j-1-est-5} and \eqref{j-1-est-6} to conclude our desired claim. \newline

\textbf{Step B}: For the estimate of $\mathcal{J}_2$, we obtain
\begin{align*}
\begin{aligned}
\mathcal{J}_2 &\leq \int_{\R^{2d} \times \R^{2d}} \left|\psi^1_{\e'}\left( \mt^0_1(x) - \mt^0_1(y)\right)\right| \left| w - \mt^0_2(w)\right|f_\e(x,v) f_\e(y,w) dx dy dv dw \cr
&  \quad + \int_{\R^{2d} \times \R^{2d}} \left|\psi^1_{\e'}\left( \mt^0_1(x) - \mt^0_1(y)\right)\right| \left| v - \mt^0_2(v)\right| f_\e(x,v) f_\e(y,w) dx dy dv dw \cr
&=: \mathcal{J}_2^1 + \mathcal{J}_2^2.
\end{aligned}
\end{align*}
On the other hand, we can easily find that
\begin{align*}
\begin{aligned}
\mathcal{J}_2^1 &= \int_{\R^d \times \R^d} \left( \int_{\R^d \times \Omega_0 }\left|\psi^1_{\e'}\left( \mt^0_1(x) - \mt^0_1(y)\right)\right| f_\e(x,v) dx dv \right) \left|w - \mt^0_2(w)\right| f_\e(y,w)dydw \cr
&\leq C\|f_{\e'}\|\int_{\R^d \times \R^d}|w - \mt^0_2(w)|f_\e(y,w)dydw \leq C\|f_{\e'}\|d_1(f_\e(t_0),f_{\e'}(t_0)),
\end{aligned}
\end{align*}
where we used the estimates in \eqref{j-1-est-4} again. Similarly, we get
\[
\mathcal{J}_2^2 \leq C\|f_{\e'}\|d_1(f_\e(t_0),f_{\e'}(t_0)),
\]
and this deduces
\begin{equation}\label{j-2-est-1}
\mathcal{J}_2 \leq C\|f_{\e'}\|d_1(f_\e(t_0),f_{\e'}(t_0)).
\end{equation}
Thus we have
\[
\int_{\R^d \times \R^d}|\mathcal{J}_2|f_\e dxdv \leq C\|f_{\e}\|\|f_{\e'}\|d_1(f_\e(t_0),f_{\e'}(t_0)).
\]
We now combine \eqref{i-est}, \eqref{claim} and \eqref{j-2-est-1} to find
\[
\frac{d}{dt}Q_{\e,\e'}(t) \Big|_{t = t_0+} \leq  C(d_1(f_\e(t_0), f_{\e'}(t_0)) + \e + \e').
\]
We finally write the integral form, dividing $t - t_0$, and taking the limit $t \to t_0^+$ to conclude
\[
\frac{d}{dt}d_1(f_\e(t),f_{\e'}(t))\Big|_{t = t_0^+} \leq C(d_1(f_\e(t_0), f_{\e'}(t_0) + \e + \e').
\]
Since $t_0$ is arbitrary in $[0,T)$, this yields
\begin{equation*}
\frac{d}{dt}d_1(f_\e(t),f_{\e'}(t)) \leq C(d_1(f_\e(t_0), f_{\e'}(t_0) + \e + \e'),
\end{equation*}
where $C$ is independent of $\e$ and $\e'$.
\end{proof}
\subsection{Existence and uniqueness of weak solutions (limit as $\e \to 0$)}\label{sec3.2} It follows from Proposition \ref{prop-grow} that $\{f_\e\}_{\e >0}$ is a Cauchy sequence in $\mathcal{C}([0,T];\pp_1(\R^d \times \R^d))$, and this implies that there exists a limit curve of measure $f \in \mathcal{C}([0,T];\pp_1(\R^d \times \R^d))$, and $f \in L^{\infty}(0,T;(L_+^1 \cap L^p)(\R^d \times \R^d))$. Thus it only remains to show that $f$ is a solution of the Cucker-Smale model \eqref{k-CS-si}. Choose a test function $\Psi(x,v,t) \in \mathcal{C}_c^{\infty}(\R^d \times \R^d \times [0,T])$, then $f_\e$ satisfies
\begin{align}\label{weak-var}
\begin{aligned}
&\int_{\R^d \times \R^d} \Psi^0(x,v)f^0(x,v) dx dv \cr
& \quad = \int_{\R^d \times \R^d} \Psi(x,v,T)f_\e(x,v,T)dx dv + \int_0^T\int_{\R^d \times \R^d} f_\e(x,v,t)\partial_t\Psi(x,v,t)dxdvdt \cr
& \qquad -\int_0^T\int_{\R^d \times \R^d} (\nabla_x \Psi) \cdot v f_\e dx dvdt - \int_0^T\int_{\R^d \times \R^d} (\nabla_v \Psi) \cdot F_1(f_\e)f_\e dx dv dt.
\end{aligned}
\end{align}
We can easily show that the first, second, and third terms in the rhs of \eqref{weak-var} converge to
\begin{align*}
\begin{aligned}
&\int_{\R^d \times \R^d} \Psi(x,v,T)f(x,v,T)dx dv + \int_0^T\int_{\R^d \times \R^d} f(x,v,t)\partial_t\Psi(x,v,t)dxdvdt \cr
& \qquad -\int_0^T\int_{\R^d \times \R^d} (\nabla_x \Psi) \cdot v f dx dvdt \quad \mbox{as} \quad \e \to 0,
\end{aligned}
\end{align*}
since $f_\e \to f$ in $\mathcal{C}([0,T],\pp_1(\R^d \times \R^d))$. We also notice that
\begin{align*}
\begin{aligned}
&\left| \int_0^T \int_{\R^{2d} \times \R^{2d}} (\psi^1_\e - \psi^1)(x-y)(\nabla_v \Psi) \cdot (w-v)f_\e(x,v)f_\e(y,w) dxdvdydwdt \right| \cr
&\leq C\e R_0^v\|\nabla_v \Psi\|_{L^{\infty}(0,T;L^{\infty}(\R^d \times \R^d))}\int_0^T\int_{\R^{2d} \times \tilde{\Omega}_0^2} \frac{1}{|x-y|^{1+\alpha}}f_\e(x,v)f_\e(y,w) dxdvdydwdt \cr
& \leq C\e\|f_\e\|^2 \leq C\e \to 0 \quad \mbox{as} \quad \e \to 0\,,
\end{aligned}
\end{align*}
where we used the decomposition in local and far fields as in \eqref{j-1-est-4}. 
Thus in order to obtain
\[
\int_0^T\int_{\R^d \times \R^d} (\nabla_v \Psi) \cdot F_1(f_\e)f_\e dx dv dt \to \int_0^T\int_{\R^d \times \R^d} (\nabla_v \Psi) \cdot F_1(f)f dx dv dt,
\]
it only remains to show
\begin{align}\label{ext-est-1}
\begin{aligned}
&\int_0^T \int_{\R^{2d} \times \R^{2d}} \psi^1(x-y) (\nabla_v \Psi) \cdot (w-v)f_\e(x,v)f_\e(y,w) dxdvdydwdt \cr
& \qquad \quad \to \int_0^T \int_{\R^{2d} \times \R^{2d}} \psi^1(x-y) (\nabla_v \Psi) \cdot (w-v)f(x,v)f(y,w) dxdvdydwdt,
\end{aligned}
\end{align}
as $\e \to 0$. For this, we introduce a cut-off function
$\chi_{\delta} \in \mathcal{C}_c^{\infty}(\R^d)$ such that
\[
\chi_{\delta}(x) = \left\{  \begin{array}{ll}
1 & \mbox{if} \quad |x|\leq \delta \\
0 & \mbox{if} \quad |x|\geq 2\delta
\end{array} \right.\,.
\]
Then since $(1 - \chi_{\delta}(x-y))\psi^1(x-y)(w-v)\cdot \nabla_v \Psi$ is a Lipschitz function and $f_\e \to f$ in $\mathcal{C}([0,T],\pp_1(\R^d \times \R^d))$, we find
\begin{align}\label{ext-est-2}
\begin{aligned}
&\int_0^T \int_{\R^{2d} \times \R^{2d}} (1 - \chi_{\delta})\psi^1(x-y) (\nabla_v \Psi) \cdot (w-v)f_\e(x,v)f_\e(y,w) dxdvdydwdt \cr
& \qquad \quad \to \int_0^T \int_{\R^{2d} \times \R^{2d}}(1 - \chi_{\delta}) \psi^1(x-y) (\nabla_v \Psi) \cdot (w-v)f(x,v)f(y,w) dxdvdydwdt,
\end{aligned}
\end{align}
as $\e \to 0$ for any $\delta > 0$.
On the other hand, the remaining term is estimated as follows:
\begin{align}\label{ext-est-3}
\begin{aligned}
&\int_0^T \int_{\R^{2d} \times \R^{2d}} \chi_{\delta}(x-y)\psi^1(x-y) (\nabla_v \Psi) \cdot (w-v)f_\e(x,v)f_\e(y,w) dxdvdydwdt \cr
&\quad \leq C\delta\int_0^T\int_{\{(x,y)\in \R^d \times \R^d :~|x-y|\leq
2\delta\} \times \tilde{\Omega}_0^2} \frac{1}{|x-y|^{1+\alpha}}f_\e(x,v)f_\e(y,w) dxdvdydwdt \cr
&\quad \leq C\delta \to 0 \quad \mbox{as} \quad \delta \to 0,
\end{aligned}
\end{align}
and similarly, we also have
\begin{equation}\label{ext-est-4}
\int_0^T \int_{\R^{2d} \times \R^{2d}}\chi_{\delta}(x-y) \psi^1(x-y) (\nabla_v \Psi) \cdot (w-v)f(x,v)f(y,w) dxdvdydwdt \leq C\delta \to 0,
\end{equation}
as $\delta \to 0$ due to the fact that $f$ has a compact support in velocity. Hence we conclude the convergence \eqref{ext-est-1} combining \eqref{ext-est-2}, \eqref{ext-est-3}, and \eqref{ext-est-4}. Uniqueness of the weak solutions $f_\e$ is just followed from Proposition \ref{prop-grow}. More specifically, let $f_1, f_2 \in L^{\infty}(0,T;(L_+^1\cap L^p)(\R^d \times \R^d)) \cap \mathcal{C}([0,T],\pp_1(\R^d \times \R^d))$ be the weak solutions to the system \eqref{k-CS-si} with same initial data $f^0 \in (L_+^1 \cap L^p)(\R^d \times \R^d) \cap \pp_1(\R^d \times \R^d)$ satisfying the framework $A$. Then Proposition \ref{prop-grow} yields that
\[
\frac{d}{dt}d_1(f_1(t),f_2(t)) \leq C\max( \|f_1\|, \|f_2\|)d_1(f_1(t),f_2(t)), \quad \mbox{for} \quad t \in [0,T].
\]
This completes the proof in the framework A.
\begin{proposition} Let $f$ be a weak solutions to \eqref{k-CS-si} on the time-interval $[0,T)$ in the sense of Definition {\rm\ref{def-ext}}. Then $f$ is determined as the push-forward of the initial density through the flow map generated by $(v,F_i(f))$.
\end{proposition}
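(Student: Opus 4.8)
The plan is to reuse the flow representation already obtained for the regularized solutions in Proposition \ref{prop-grow} (see \eqref{push-f}--\eqref{push-f-2}), but now directly for the limiting weak solution $f$. The only genuinely new point is that the driving field $F_i(f)$ is built from a singular kernel, so I must first verify that it still generates a well-defined Lipschitz flow; once this is granted, the push-forward identity follows from the weak formulation of Definition \ref{def-ext} by testing against functions that are constant along the characteristics.

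\emph{Step 1 (regularity of the force field).} I would first show that $F_i(f)(\cdot,\cdot,t)$ is bounded and Lipschitz in $(x,v)$, with constants uniform in $t\in[0,T]$; here $f$ is known to lie in $L^\infty(0,T;(L^1_+\cap L^p)(\R^d\times\R^d))$ and to have velocity support in $\tilde\Omega_0$. Boundedness is obtained exactly as in Proposition \ref{prop-lp}: bounding $|w-v|\le 2R_0^v$ by the compact support in velocity from Lemma \ref{lem-gro-vel} and splitting $\psi^1=\psi^1\chi_1+\psi^1(1-\chi_1)$ into a near field controlled in $L^{p'}$ against $\|f\|_{L^p}$ and a far field controlled in $L^\infty$ against $\|f\|_{L^1}$. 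The Lipschitz bound in $v$ uses the linear dependence ($\beta_1=2$, so $\nabla_v\phi(v-w)=v-w$), and the Lipschitz bound in $x$ follows from
\[
|F_1(f)(x_1,v,t)-F_1(f)(x_2,v,t)|\le 2R_0^v\int_{\R^d}\big|\psi^1(x_1-y)-\psi^1(x_2-y)\big|\,\rho_f(y,t)\,dy,
\]
with $\rho_f$ the spatial marginal, together with the Hölder estimate $\|\psi^1(x_1-\cdot)-\psi^1(x_2-\cdot)\|_{L^{p'}}\le C|x_1-x_2|$. This last bound is precisely where the Framework A condition $(\alpha+1)p'<d$ enters: after the near/far splitting it amounts to $\nabla\psi^1\in L^{p'}_{loc}$, i.e. to the local integrability of $|z|^{-(1+\alpha)p'}$, consistently with the difference estimate \eqref{key-ineq}.

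\emph{Steps 2 and 3 (flow and push-forward identity).} Since $F_i(f)$ is bounded, measurable in $t$, and Lipschitz in $(x,v)$ with a $t$-uniform Lipschitz constant, the Cauchy--Lipschitz (Carath\'eodory) theorem yields a unique flow $Z(t;s,x,v)=(X(t;s,x,v),V(t;s,x,v))$ solving $\dot X=V$, $\dot V=F_i(f)(Z,t)$ with $Z(s;s,\cdot)=\mathrm{id}$, which is a bi-Lipschitz homeomorphism of $\R^d\times\R^d$ obeying the group law $Z(t;r,\cdot)\circ Z(r;s,\cdot)=Z(t;s,\cdot)$. For $h\in\mathcal{C}^1_c(\R^d\times\R^d)$ I then set $\Psi(x,v,t):=h(Z(0;t,x,v))$; by the group law $\Psi(Z(t;0,\cdot),t)\equiv h$ is independent of $t$, so $\Psi$ solves $\partial_t\Psi+v\cdot\nabla_x\Psi+F_i(f)\cdot\nabla_v\Psi=0$ and is Lipschitz with support compact in $(x,v)$ uniformly on $[0,t]$. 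Inserting $\Psi$ into the weak formulation on $[0,t]$ (justifying Lipschitz test functions by mollifying $\Psi$, whose transport defect then tends to zero) makes the interior integral vanish and leaves
\[
\int_{\R^d\times\R^d} h(Z(0;t,x,v))\,f(x,v,t)\,dx\,dv=\int_{\R^d\times\R^d} h(x,v)\,f^0(x,v)\,dx\,dv.
\]
By the characterization of push-forward in Proposition \ref{prop-proper}(i) this reads $Z(0;t,\cdot)\#f(t)=f^0$, and applying $Z(t;0,\cdot)=Z(0;t,\cdot)^{-1}$ gives $f(t)=Z(t;0,\cdot)\#f^0$ for every $t\in[0,T)$. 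The argument in Framework B is identical once the corresponding Lipschitz estimate for the singular velocity coupling replaces Step 1.

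\emph{Main obstacle.} The crux is Step 1: converting the pointwise singularity of $\psi^1$ (resp. of the velocity coupling in Framework B) into a genuine Lipschitz bound for $F_i(f)$, which is exactly what the integrability thresholds in Frameworks A and B are designed to guarantee. A secondary technical point is the admissibility in Definition \ref{def-ext} of the only-Lipschitz characteristic test function $\Psi$, which I would settle by approximation.
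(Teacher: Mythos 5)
Your proposal is correct and follows essentially the same route as the paper: construct the characteristic flow of $(v,F_i(f))$ and test the weak formulation against functions constant along it to obtain $f(t)=Z(t;0,\cdot)\#f^0$. The paper's own proof is only a two-line sketch referring back to Proposition \ref{prop-grow} and \eqref{push-f-2}; your Step 1 (the Lipschitz bound on $F_1(f)$ via \eqref{key-ineq}, the compact velocity support, and the condition $(\alpha+1)p'<d$) and your mollification of the merely Lipschitz test function $\Psi$ supply exactly the details that sketch leaves implicit.
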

\begin{proof}
Cconsider the following flow map:
\begin{equation}\label{flow}
\left\{ \begin{array}{ll}
\displaystyle \frac{d}{dt}X(t;s,x,v) = V(t;s,x,v), & \\[2mm]
\displaystyle \frac{d}{dt}V(t;s,x,v) = F_1(f)(X(t;s,x,v),V(t;s,x,v),t), &\\[4mm]
(X(s;s,x,v),V(s;s,x,v)) = (x,v),&
\end{array} \right.
\end{equation}
for all $s,t\in[0,T]$. Then since $f$ has compact support in $v$, the flow map \eqref{flow} is well-defined using the same argument in the proof of Proposition \ref{prop-grow}. Moreover we can use the similar argument to \eqref{push-f-2} to have
\[
\int_{\R^d \times \R^d} h(x,v)f(x,v,t) dx dv = \int_{\R^d \times \R^d}h(X(0;t,x,v),V(0;t,x,v))f^0(x,v)dxdv,
\]
for $t \in [0,T]$. This yields that $f$ is determined as the push-forward of the initial density through the flow map \eqref{flow}. 
\end{proof}
\begin{proof}[Proof of Theorem \ref{ma-thm} in the framework $B$] Similarly, we first regularize the nonlinear velocity coupling as $\nabla \phi_\e(v) := \frac{v}{|v|^{2-\beta_2} + \e}$, and define the $f_\e$ by this regularized system. Then we easily find that the estimates of support of $f$ in position and velocity, and first momentums using the same arguments in Lemma \ref{lem-gro-vel} and Remark \ref{rmk-1-mom}. The remaining parts are obtained by similar arguments as in Section \ref{sec-local-well}.
\end{proof}
%
%
%
%

\section{Rich dynamics of the generalized Cucker-Smale particle system}
In this part, we investigate the dynamics of the generalized Cucker-Smale particle system.
We consider Cucker-Smale particle system:
\begin{align}\label{par-gene}
\begin{aligned}
\frac{dx_i(t)}{dt} &= v_i(t),\cr
\frac{dv_i(t)}{dt} &= \frac1N \sum_{j=1}^N \psi^k(|x_j(t) - x_i(t)|)\frac{v_j - v_i}{|v_j - v_i|^{2-\beta}}, \quad k = 1,2,
\end{aligned}
\end{align}
subject to $(x_i(0),v_i(0)) =: (x_i^0,v_i^0)$. Here $\psi^k,k=1,2$ are given in \eqref{condi:psi}. We assume further that $\psi^k(s),k=1,2$ satisfy $\alpha \geq 1$ and
\[
0 < \psi^2(s_1) \leq \psi^2(s_2) \quad \mbox{for} \quad 0 \leq s_2 \leq s_1 < \infty, \quad \mbox{and} \quad \psi^2(s) \to 0 \quad \mbox{as} \quad s \to \infty,
\]
respectively. 

For the system \eqref{par-gene} with $\beta=2$, there are only two papers \cite{ACHL,Pes} on the existence theory and large time behaviour to our best knowledge. In \cite{ACHL}, the authors identified the initial configurations that prevent the pairwise collisions in a finite-time when the singularity of communication weight is strong enough such as $\alpha \geq 1$. One-dimensional discrete C-S model \eqref{par-gene} was treated in \cite{Pes}, and showed existence  of piecewise weak solutions when the singularity of communication weight is sufficiently weak, $\alpha < 1$. 

Note that the original alignment force  of Cucker-Smale model satisfies the conditions of $\psi^2$. Without loss of generality, we may assume that
\[
\sum_{j=1}^Nv_i(0) = 0.
\]
We set 
\[
\|x\|_\infty := \max_{1\leq i \leq N}|x_i| \quad \mbox{and} \quad \|v\|_\infty := \max_{1\leq i \leq N}|v_i|.
\]
We notice that we can choose an index $i$ such that $\|v(t)\|_\infty = |v_i(t)|$ at any time $t$. Then it follows from \cite{CCR2,HHK} that
\[
\frac{d}{dt}\|v(t)\|_\infty^2 \leq -C_0\psi^k(2\|x(t)\|_\infty)\|v(t)\|_\infty^\beta, \quad \mbox{for} \quad \beta \in (0,3).
\]
It is also clear to obtain $\left| \frac{d\|x\|_\infty}{dt}\right| \leq \|v\|_\infty$. We now take the similar argument in \cite{ACHL,HL}, and define two Lyanpunov type functionals $\mathcal{E}_\pm(x,v)$:
\[
\mathcal{E}_\pm(x(t),v(t)) := \frac{1}{3-\beta}\|v(t)\|_\infty^{3-\beta} \pm \frac{C_0}{2}\Psi^k(2\|x(t)\|_\infty),
\]
where $\Psi^k(\cdot)$ is a primitive of $\psi^k$.

We next present two lemmas that can be obtained using the similar argument in \cite{ACHL,HL}.
\begin{lemma} Let $(x,v)$ be any smooth solutions to the system \eqref{par-gene}.
Then we have
\begin{align*}
\begin{aligned}
&(i) \,\,\,\, \mathcal{E}_\pm(x(t),v(t)) \leq \mathcal{E}_\pm(x_0,v_0).\cr
&(ii) \,\, \|v(t)\|_\infty^{3-\beta} + \frac{(3-\beta)C_0}{2}\left|\int_{2\|x_0\|_\infty}^{2\|x(t)\|_\infty} \psi^k(s) ds\right| \leq \|v_0\|_\infty^{3-\beta}.
\end{aligned}
\end{align*}
\end{lemma}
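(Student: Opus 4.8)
The plan is to obtain (i) from the pointwise-in-time inequality $\frac{d}{dt}\mathcal{E}_\pm(x(t),v(t)) \le 0$, integrated in $t$, and then to read off (ii) by using the two sign choices $\pm$ simultaneously. Everything rests on the two estimates recorded just above the statement, the velocity dissipation $\frac{d}{dt}\|v(t)\|_\infty^2 \le -C_0\,\psi^k(2\|x(t)\|_\infty)\,\|v(t)\|_\infty^\beta$ and the position bound $\bigl|\frac{d}{dt}\|x(t)\|_\infty\bigr| \le \|v(t)\|_\infty$, together with $\psi^k \ge 0$ and $(\Psi^k)' = \psi^k$.

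For (i) I would first note that $t \mapsto \|v(t)\|_\infty$ and $t\mapsto\|x(t)\|_\infty$ are Lipschitz, being maxima over the finitely many indices of smooth (resp. Lipschitz) functions, so $\mathcal{E}_\pm(x(t),v(t))$ is absolutely continuous and it suffices to estimate its derivative at points of differentiability; at such a point I select an index realizing the maximum exactly as in the proof of Lemma~\ref{lem-gro-vel}. Applying the chain rule $\frac{d}{dt}\|v\|_\infty^{3-\beta} = \frac{3-\beta}{2}\|v\|_\infty^{1-\beta}\frac{d}{dt}\|v\|_\infty^2$ and inserting the velocity dissipation gives a purely dissipative bound $\frac{d}{dt}\bigl(\frac{1}{3-\beta}\|v\|_\infty^{3-\beta}\bigr) \le -c_1\,\psi^k(2\|x\|_\infty)\,\|v\|_\infty$, while differentiating the potential part yields $\frac{d}{dt}\bigl(\pm\frac{C_0}{2}\Psi^k(2\|x\|_\infty)\bigr) = \pm C_0\,\psi^k(2\|x\|_\infty)\frac{d}{dt}\|x\|_\infty$, whose size is at most $c_2\,\psi^k(2\|x\|_\infty)\|v\|_\infty$ thanks to the position bound and $\psi^k\ge0$. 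The normalization $\frac{1}{3-\beta}$ and the coefficient $\frac{C_0}{2}$ in $\mathcal{E}_\pm$ are chosen precisely so that $c_2 \le c_1$, whence the sign-indefinite position contribution is absorbed by the dissipation and $\frac{d}{dt}\mathcal{E}_\pm \le 0$ for both sign choices; integrating over $[0,t]$ gives (i).

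To get (ii) I would write (i) explicitly, $\frac{1}{3-\beta}\|v(t)\|_\infty^{3-\beta} \pm \frac{C_0}{2}\Psi^k(2\|x(t)\|_\infty) \le \frac{1}{3-\beta}\|v_0\|_\infty^{3-\beta} \pm \frac{C_0}{2}\Psi^k(2\|x_0\|_\infty)$, and rearrange using $\Psi^k(2\|x(t)\|_\infty) - \Psi^k(2\|x_0\|_\infty) = \int_{2\|x_0\|_\infty}^{2\|x(t)\|_\infty}\psi^k(s)\,ds$. The two inequalities differ only in the sign in front of this integral; keeping, according to the sign of the integral, the one in which it appears with a nonnegative coefficient produces $\frac{1}{3-\beta}\|v(t)\|_\infty^{3-\beta} + \frac{C_0}{2}\bigl|\int_{2\|x_0\|_\infty}^{2\|x(t)\|_\infty}\psi^k(s)\,ds\bigr| \le \frac{1}{3-\beta}\|v_0\|_\infty^{3-\beta}$, and multiplying by $(3-\beta)>0$ is exactly the claimed bound.

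The main obstacle I anticipate is making the derivative computations rigorous despite the nonsmoothness of the maxima $\|v\|_\infty$ and $\|x\|_\infty$: one must justify the absolute continuity of $\mathcal{E}_\pm$, work with a.e./one-sided derivatives and a consistent choice of active index, and control $\|v\|_\infty^{3-\beta}$ near times where $\|v\|_\infty$ vanishes (an issue only when $\beta > 2$, where $3-\beta<1$). The remaining delicate point is the constant bookkeeping in the absorption step, where the nonnegativity of $\psi^k$ and the sign-indefinite factor $\frac{d}{dt}\|x\|_\infty$ have to be handled uniformly for both $\mathcal{E}_+$ and $\mathcal{E}_-$.
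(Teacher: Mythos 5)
Your overall strategy---differentiate $\mathcal{E}_\pm$ along the flow, insert the two differential inequalities recorded just above the statement, absorb the sign-indefinite potential contribution into the velocity dissipation, and then obtain (ii) from (i) by choosing the sign of $\pm$ according to the sign of $\int_{2\|x_0\|_\infty}^{2\|x(t)\|_\infty}\psi^k(s)\,ds$---is exactly the Lyapunov-functional argument of \cite{ACHL,HL} that the paper invokes (the paper gives no proof, only that citation), and your reduction of (ii) to (i) is correct.

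However, the absorption step, which you assert works ``by choice of normalization,'' does not close with the constants as they stand, and your own displayed formulas already show this. From $\frac{d}{dt}\|v\|_\infty^2\le -C_0\psi^k(2\|x\|_\infty)\|v\|_\infty^\beta$ you correctly obtain
\[
\frac{d}{dt}\Bigl(\tfrac{1}{3-\beta}\|v\|_\infty^{3-\beta}\Bigr)=\tfrac12\|v\|_\infty^{1-\beta}\frac{d}{dt}\|v\|_\infty^2\le-\tfrac{C_0}{2}\,\psi^k(2\|x\|_\infty)\,\|v\|_\infty ,
\]
so $c_1=C_0/2$; but, as your own computation records, the chain rule applied to $\Psi^k(2\|x\|_\infty)$ produces an extra factor $2$, so
\[
\Bigl|\tfrac{C_0}{2}\tfrac{d}{dt}\Psi^k(2\|x\|_\infty)\Bigr|=C_0\,\psi^k(2\|x\|_\infty)\Bigl|\tfrac{d}{dt}\|x\|_\infty\Bigr|\le C_0\,\psi^k(2\|x\|_\infty)\,\|v\|_\infty ,
\]
i.e.\ $c_2=C_0=2c_1$. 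Hence the best one gets is $\frac{d}{dt}\mathcal{E}_\pm\le +\frac{C_0}{2}\psi^k(2\|x\|_\infty)\|v\|_\infty$, not $\le 0$, and the sentence claiming $c_2\le c_1$ is false as written. The mismatch comes from transplanting the argument of \cite{ACHL,HL}, where the dissipation inequality is stated for $\|v\|$ rather than $\|v\|^2$ (so there is no extra $\tfrac12$) and the prefactor $\tfrac12$ on $\Psi(2\|x\|)$ exactly cancels the chain-rule $2$. To make the argument close one must either replace the coefficient $\frac{C_0}{2}$ in $\mathcal{E}_\pm$ by $\frac{C_0}{4}$ (and correspondingly $\frac{(3-\beta)C_0}{2}$ by $\frac{(3-\beta)C_0}{4}$ in (ii) and in the definitions of $x_m,x_M$), or start from a dissipation inequality of the form $\frac{d}{dt}\|v\|_\infty\le -C_0\psi^k(2\|x\|_\infty)\|v\|_\infty^{\beta-1}$; you should state which repair you adopt rather than assert the bookkeeping works. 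Your remaining caveats (a.e.\ differentiability of the maxima, the behaviour of $\|v\|_\infty^{3-\beta}$ at zeros of $\|v\|_\infty$ when $\beta>2$) are appropriate but secondary.
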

\begin{lemma}\label{lem:key2} Let $(x,v)$ be any smooth solutions to the system \eqref{par-gene}. If the initial data $(x_0,v_0)$ satisfies
\begin{equation}\label{basic-ini}
\|x_0\|_\infty > 0, \quad \|v_0\|_\infty^{3-\beta} < \frac{(3-\beta)C_0}{2}\min \left\{ \int_0^{2\|x_0\|_\infty} \psi^k(s) ds, \int_{2\|x_0\|_\infty}^\infty \psi^k(s)ds\right\},
\end{equation}
then there exist positive constants $x_m,x_M > 0$ such that
\[
\|x(t)\|_\infty \in [x_m, x_M], \quad \frac{d}{dt}\|v(t)\|_\infty^2 \leq -C_0\psi^k(2x_M)\|v(t)\|_\infty^\beta,
\]
where $x_m$ and $x_M$ are defined by
\[
\|v_0\|_\infty^{3-\beta} = \frac{(3-\beta)C_0}{2}\int_{2x_m}^{\|2x_0\|_\infty} \psi^k(s) ds \quad \mbox{and} \quad \quad \|v_0\|_\infty^{3-\beta} = \frac{(3-\beta)C_0}{2}\int_{2\|x_0\|_\infty}^{2x_M} \psi^k(s) ds,
\]
respectively.
\end{lemma}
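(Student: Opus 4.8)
The plan is to derive everything from the dissipation estimate (ii) of the preceding lemma together with the positivity and monotonicity of $\psi^k$. First I would introduce the auxiliary function
\[
G(r) := \int_{2\|x_0\|_\infty}^{2r} \psi^k(s)\,ds, \qquad r > 0,
\]
which, since $\psi^k > 0$, is continuous and \emph{strictly increasing} in $r$, with $G(\|x_0\|_\infty) = 0$, and which satisfies $G(r) \to -\int_0^{2\|x_0\|_\infty}\psi^k(s)\,ds$ as $r \downarrow 0$ and $G(r) \to \int_{2\|x_0\|_\infty}^{\infty}\psi^k(s)\,ds$ as $r \uparrow \infty$ (these limits may be $\mp\infty$; e.g.\ for $\psi^1$ with $\alpha \geq 1$ the first is $-\infty$, which only makes the argument below easier). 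Rewriting the threshold hypothesis \eqref{basic-ini} as
\[
0 < \tfrac{2}{(3-\beta)C_0}\|v_0\|_\infty^{3-\beta} < \min\left\{ \int_0^{2\|x_0\|_\infty}\psi^k(s)\,ds,\ \int_{2\|x_0\|_\infty}^{\infty}\psi^k(s)\,ds \right\},
\]
the intermediate value theorem and strict monotonicity of $G$ then produce unique numbers $0 < x_m < \|x_0\|_\infty < x_M < \infty$ with $G(x_M) = \tfrac{2}{(3-\beta)C_0}\|v_0\|_\infty^{3-\beta}$ and $G(x_m) = -\tfrac{2}{(3-\beta)C_0}\|v_0\|_\infty^{3-\beta}$, which are exactly the defining relations for $x_m,x_M$ in the statement. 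This settles the existence claim.

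For the confinement, I would start from estimate (ii) of the preceding lemma and drop the nonnegative term $\|v(t)\|_\infty^{3-\beta}$ to get
\[
\frac{(3-\beta)C_0}{2}\left| \int_{2\|x_0\|_\infty}^{2\|x(t)\|_\infty} \psi^k(s)\,ds \right| \leq \|v_0\|_\infty^{3-\beta}, \qquad t \in [0,T].
\]
In terms of $G$ this reads $|G(\|x(t)\|_\infty)| \leq \tfrac{2}{(3-\beta)C_0}\|v_0\|_\infty^{3-\beta}$, i.e.\ $G(x_m) \leq G(\|x(t)\|_\infty) \leq G(x_M)$, and since $G$ is strictly increasing this is equivalent to $x_m \leq \|x(t)\|_\infty \leq x_M$. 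No time-bootstrap is needed: the confinement is a pointwise-in-$t$ consequence of the energy inequality, valid on the whole interval of existence.

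Finally, since $\psi^k$ is nonincreasing and $\|x(t)\|_\infty \leq x_M$, we have $\psi^k(2\|x(t)\|_\infty) \geq \psi^k(2x_M)$, and inserting this into the basic velocity inequality $\frac{d}{dt}\|v(t)\|_\infty^2 \leq -C_0\psi^k(2\|x(t)\|_\infty)\|v(t)\|_\infty^\beta$ yields $\frac{d}{dt}\|v(t)\|_\infty^2 \leq -C_0\psi^k(2x_M)\|v(t)\|_\infty^\beta$, as required. I expect the only genuinely delicate point to be the well-posedness of the definitions of $x_m$ and $x_M$: one must check that the min-threshold in \eqref{basic-ini} is precisely what forces the target values $\pm\tfrac{2}{(3-\beta)C_0}\|v_0\|_\infty^{3-\beta}$ to lie strictly inside the range of $G$. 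Once $G$ is recognized as a strictly monotone function, both the confinement and the velocity decay follow immediately.
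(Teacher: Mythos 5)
Your argument is correct and is essentially the proof the paper has in mind: the paper omits the details, deferring to the arguments of \cite{ACHL,HL}, and those consist precisely of your steps --- the strictly monotone primitive $G$ together with the intermediate value theorem and hypothesis \eqref{basic-ini} to define $x_m,x_M$, the pointwise-in-$t$ use of part (ii) of the preceding lemma to confine $\|x(t)\|_\infty$ in $[x_m,x_M]$, and the monotonicity of $\psi^k$ to replace $\psi^k(2\|x(t)\|_\infty)$ by $\psi^k(2x_M)$ in the velocity dissipation inequality. Your observation that no continuity/bootstrap argument is needed is also right here, since the energy estimate (ii) is already stated globally in time.
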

\begin{remark}Note that if $\alpha \in [1,d-1)$ for $d>2$, then $\int_0^{2\|x_0\|_\infty}\psi^1(s) ds = \infty$ and this yields that we only need the following condition for $v_0$ in Lemma \ref{lem:key2}:
\[
\|v_0\|_\infty^{3-\beta} < \frac{(3-\beta)C_0}{2}\int_{2\|x_0\|_\infty}^\infty \psi^1(s) ds.
\]
\end{remark}
\begin{theorem} Let $(x,v)$ be any smooth solutions to the system \eqref{par-gene} with initial data $(x_0,v_0)$ satisfying \eqref{basic-ini}. Then the followings hold:\newline

$\bullet$ If $\beta = 2$, we have an exponential alignment between velocities:
\[
\|v(t)\|_\infty \leq \|v_0\|_\infty \exp\left\{ -\frac{C_0\psi^k(2x_M)t}{2}\right\}.
\]
Furthermore, if $\eta^0_{m,X} > \frac{\|v_0\|_\infty}{C_0\psi^k(2x_M)}$, then we have no finite-time collision between particles and 
\[
\|v(t)\|_\infty \geq \|v_0\|_\infty \exp\left\{ - \psi^k(\eta_{m,X}^* )t\right\},
\]
where $\eta_{m,X}^*:= \eta^0_{m,X} - \frac{\|v_0\|_\infty}{C_0\psi^k(2x_M)} > 0$.\newline

$\bullet$ If $\beta \in (0,2)$, we have a finite-time alignment between velocities:
\[
\|v(t)\|_\infty \leq \left( \|v_0\|_\infty^{2-\beta} - \frac{(2-\beta)C_0\psi^k(2x_M)t}{2}\right)^{\frac{1}{2-\beta}}.
\]
Furthermore if $\eta^0_{m,X} > T^*\|v_0\|_\infty$, then we have no collision between particles, where
\[
T^* := \frac{4\|v_0\|^{2-\beta}}{(2-\beta)C_0\psi^k(2x_M)}.
\]

$\bullet$ If $\beta \in (2,3)$, we have an polynomial alignment between velocities:
\[
\|v(t)\|_\infty \leq \left( \|v_0\|_\infty^{2-\beta} + \frac{(\beta-2)C_0\psi^k(2x_M)t}{2}\right)^{-\frac{1}{\beta-2}}.
\]
\end{theorem}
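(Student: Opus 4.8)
The plan is to collapse the entire statement onto the single scalar differential inequality supplied by Lemma~\ref{lem:key2} and then integrate it separately in each of the three ranges of $\beta$, treating the collision-avoidance assertions by an independent control of the inter-particle distances. Throughout I would write $V(t):=\|v(t)\|_\infty$ and $c:=C_0\psi^k(2x_M)$, noting $c>0$ because $x_M<\infty$ and $\psi^k>0$. Under the standing hypothesis \eqref{basic-ini}, Lemma~\ref{lem:key2} confines $\|x(t)\|_\infty\in[x_m,x_M]$ and furnishes
\[
\frac{d}{dt}V(t)^2 \leq -c\,V(t)^\beta, \qquad t\geq 0,
\]
which, wherever $V>0$, is the same as $V'(t)\leq -\tfrac{c}{2}V(t)^{\beta-1}$.

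For the three alignment rates I would simply integrate this inequality. When $\beta=2$ it reads $\tfrac{d}{dt}V^2\leq -cV^2$, and Gronwall's lemma gives the exponential bound $V(t)\leq V(0)\exp(-ct/2)$. When $\beta\in(0,2)$ I would rewrite it as $\tfrac{d}{dt}V^{2-\beta}\leq -\tfrac{(2-\beta)c}{2}$, integrate, and raise to the power $1/(2-\beta)$; since the right-hand side drives $V^{2-\beta}$ to $0$ in finite time, this reproduces exactly the stated finite-time alignment profile. When $\beta\in(2,3)$ the exponent $2-\beta$ is negative, so multiplying through by it reverses the inequality into $\tfrac{d}{dt}V^{2-\beta}\geq \tfrac{(\beta-2)c}{2}$; integrating and taking the $-1/(\beta-2)$ power (a decreasing map on the positive reals) yields the polynomial decay. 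All three profiles thus follow from one integration, with $c=C_0\psi^k(2x_M)$ supplying the constants.

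For the collision statements I would monitor the minimal inter-particle distance $\eta_{m,X}(t):=\min_{i\neq j}|x_i(t)-x_j(t)|$, whose Lipschitz-in-time rate is controlled by the relative speed and hence by a constant multiple of $V(t)$. Integrating the decay bound just obtained limits the total admissible shrinkage: for $\beta=2$ the integral $\int_0^\infty V(t)\,dt$ is finite and comparable to $\|v_0\|_\infty/c$, while for $\beta\in(0,2)$ the velocity is exhausted by the finite time encoded in $T^*$, so $\int_0^{T^*}V(t)\,dt$ is finite. In either case the separation hypotheses ($\eta^0_{m,X}>\|v_0\|_\infty/(C_0\psi^k(2x_M))$, resp. $\eta^0_{m,X}>T^*\|v_0\|_\infty$) are calibrated so that the total shrinkage stays strictly below $\eta^0_{m,X}$; hence $\eta_{m,X}(t)$ never reaches $0$ and in fact remains above the explicit floor $\eta_{m,X}^*>0$. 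Finally, for the velocity lower bound when $\beta=2$, I would feed this positive floor back into the force: since $\psi^k$ is nonincreasing and every distance exceeds $\eta_{m,X}^*$, the alignment force is bounded above by $\psi^k(\eta_{m,X}^*)$ times the relative speed, producing a linear lower-bound ODE $\tfrac{d}{dt}V^2\geq -C\psi^k(\eta_{m,X}^*)V^2$ and hence the matching exponential lower bound.

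The main obstacle is the collision analysis rather than the three integrations. The decay estimate bounding $\int V\,dt$ relies on the confinement $x_M$ from Lemma~\ref{lem:key2}, while the distance control that yields $x_M$ and the floor $\eta_{m,X}^*$ relies in turn on that same velocity integral; the delicate point is to check that this loop is \emph{not} circular. The resolution is to order the arguments correctly: the upper decay bound is established unconditionally first, its finite time-integral then produces the positive floor on the distances, and only afterwards is that floor used to derive the lower velocity bound in the $\beta=2$ case. Reconciling the numerical constants in the two separation conditions with the integrated velocity profiles is then the only genuinely computational step.
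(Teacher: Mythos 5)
Your proposal is correct and follows essentially the same route as the paper: the three decay rates come from integrating the scalar inequality $\frac{d}{dt}\|v\|_\infty^2 \leq -C_0\psi^k(2x_M)\|v\|_\infty^\beta$ of Lemma~\ref{lem:key2} in each range of $\beta$, and collision avoidance comes from bounding $|\eta_{m,X}(t)-\eta_{m,X}^0|$ by the time-integral of the velocity decay profile. Your additional care in ordering the arguments (decay first, then the distance floor, then the exponential lower bound for $\beta=2$ via the monotonicity of $\psi^k$) fills in a step the paper leaves implicit, but the underlying argument is the same.
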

\begin{proof} The inequalities for $\|v(t)\|_\infty$ are clearly obtained from the results in Lemma \ref{lem:key2}. Concerning the initial configuration for avoiding collisions between particles, a straightforward computation yields that for $\beta = 2$
\begin{align*}
\begin{aligned}
|\eta_{m,X}(t) - \eta_{m,X}^0| &\leq \left| \int_0^t \frac{d\eta_{m,X}(s)}{ds} ds\right|\cr
&\leq 2\|v_0\|_\infty\int_0^t e^{-\frac{C_0\psi^k(2x_M)s}{2}}ds\cr
&\leq \frac{\|v_0\|_\infty}{C_0\psi^k(2x_M)}.
\end{aligned}
\end{align*}
Thus we conclude that
\[
\eta_{m,X}(t) \geq \eta_{m,X}^0 - |\eta_{m,X}(t) - \eta_{m,X}^0| \geq \eta_{m,X}^0 - \frac{\|v_0\|_\infty}{C_0\psi^k(2x_M)} > 0.
\]
Similarly, for $\beta \in (0,2)$, we have
\[
|\eta_{m,X}(t) - \eta_{m,X}^0| \leq \int_0^t \|v(s)\|_\infty ds \leq T^*\|v_0\|_\infty,
\]
and this deduces 
\[
\eta_{m,X}(t) \geq \eta_{m,X}^0 - T^*\|v_0\|_\infty > 0.
\]
This completes the proof.
\end{proof}
\begin{remark} In the case of $\beta = 2$, if we choose the initial data for position $x_0$ such that $\eta^0_{m,X} > \frac{\|v_0\|_\infty}{C_0\psi^k(2x_M)}$, then there is no collision between particles and alignment for velocities in a finite time. Similarly, if we select the initial data $x_0$ satisfying $\eta^0_{m,X} > T^*\|v_0\|_\infty$ when $\beta \in (0,2)$, then the particles do not collide each other until $T^*$.
\end{remark}
%
%
%
%

\section*{Acknowledgments}
JAC was partially supported by the project MTM2011-27739-C04-02
DGI (Spain) and 2009-SGR-345 from AGAUR-Generalitat de Catalunya.
JAC acknowledges support from the Royal Society by a Wolfson
Research Merit Award. YPC was partially supported by Basic Science Research Program through the National Research Foundation of Korea funded
by the Ministry of Education, Science and Technology (ref.
2012R1A6A3A03039496). JAC and YPC were supported by Engineering
and Physical Sciences Research Council grants with references
EP/K008404/1 (individual grant) and EP/I019111/1 (platform grant).

%
%
%
%

\end{document}